\definecolor{labelkey}{rgb}{0.6,0,1}
\renewcommand{\d}{{\rm d}}
\newcommand{\E}{{\mathbb E}}
\newcommand{\F}{{\mathbb F}}
\def\L{{\mathbb L}}
\renewcommand{\P}{{\mathbb P}}
\newcommand{\Q}{{\mathbb Q}}
\newcommand{\R}{{\bf R}}
\newcommand{\N}{{\mathbb N}}
\newcommand{\Acal}{{\mathcal A}}
\newcommand{\Bcal}{{\mathcal B}}
\newcommand{\Dcal}{{\mathcal D}}
\newcommand{\Ecal}{{\mathcal E}}
\newcommand{\Fcal}{{\mathcal F}}
\newcommand{\Kcal}{{\mathcal K}}
\newcommand{\Mcal}{{\mathcal M}}
\newcommand{\lc}{[\![}
\newcommand{\rc}{]\!]}
\newcommand{\1}[1]{{\boldsymbol 1_{\{#1\}}}}
\newcommand{\oo}{{\bf 1}}
\def\NA{{\rm NA}}
\def\ND{{\rm ND}}
\def\NUPBR{{\rm NUPBR}}
\def\NFLVR{{\rm NFLVR}}
\newtheorem{theorem}{Theorem}
\newtheorem{corollary}{Corollary}
\newtheorem{definition}{Definition}
\newtheorem{lemma}{Lemma}
\newtheorem{proposition}{Proposition}
\numberwithin{equation}{section}
\numberwithin{theorem}{section}
\numberwithin{proposition}{section}
\numberwithin{lemma}{section}
\numberwithin{definition}{section}
\begin{document}

\title{Informational Efficiency under Short Sale Constraints%
\thanks{The authors would like to thank Kostas Kardaras, Sergio Pulido and Johannes Ruf
for helpful comments and stimulating discussions. Martin Larsson gratefully
acknowledges support from the European Research Council under the
European Union's Seventh Framework Programme (FP/2007-2013) / ERC
Grant Agreement n. 307465-POLYTE.%
}}

\author{Robert A.~Jarrow%
\thanks{Johnson School of Management, Cornell University, Ithaca, NY 14853
and Kamakura Corporation, Honolulu, Hawaii 96815. E-mail: raj15@cornell.edu%
}\hspace{2cm}Martin Larsson%
\thanks{Swiss Finance Institute @ Ecole Polytechnique Fédérale de Lausanne.
E-mail: martin.larsson@epfl.ch%
}}

\date{December 31, 2013}

\maketitle
 
\begin{abstract}
A constrained informationally efficient market is defined to be one
whose price process arises as the outcome of some equilibrium where
agents face restrictions on trade. This paper investigates the case
of short sale constraints, a setting which despite its simplicity,
generates new insights. In particular, it is shown that short sale
constrained informationally efficient markets always admit equivalent
supermartingale measures and local martingale deflators, but not necessarily
local martingale measures. And if in addition some local martingale
deflator turns the price process into a true martingale, then the
market is constrained informationally efficient. Examples are given to illustrate
the subtle phenomena that can arise in the presence of short sale
constraints, with particular attention to representative agent equilibria
and the different notions of no arbitrage.

KEY WORDS: informational efficiency, short sales, no arbitrage, no
dominance, equilibrium, representative agents, martingales measures,
local martingales, supermartingales, local martingale deflators.

JEL CLASSICATION: G11, G12, G14, D52, D53 
\end{abstract}

\section{Introduction}

Since Fama \cite{Fama:1970} published his seminal paper on market
efficiency four decades ago, a vast body of empirical research has
emerged evaluating the informational efficiency of real-world markets
with mixed results. These mixed results are partly due to the the
notorious {\em joint model hypothesis} problem---to test for informational
efficiency one must jointly hypothesize an equilibrium model. Competing
theories to the rational competitive markets paradigm have also been
proposed to reconcile this conflicting evidence such as the behavioral
approach to finance and economics (excellent surveys include Hirshleifer
\cite{hirshleifer2001investor}, Shleifer \cite{shleifer2003inefficient},
Barberis and Thaler \cite{barberis2003survey}) and the adaptive market
hypothesis (see, for instance, Lo \cite{Lo2004}). In all cases, informational
efficiency, i.e.~the degree to which information about fundamentals
is quickly and accurately reflected in asset prices, has proved to
be an immensely powerful tool for understanding the behavior of asset
markets.

To circumvent this joint model hypothesis, Jarrow and Larsson \cite{Jarrow:2012fk}
formalized the definition of informational efficiency, consistent
with Fama's original ideas, as the existence of {\em some} equilibrium
supporting the given asset price process. In a continuous-time setting,
the notion of no arbitrage unfortunately becomes very delicate (see
Harrison and Kreps\cite{HarrisonKreps1979}, Delbean and Schachermayer
\cite{Delbaen/Schachermayer:1994,Delbaen/Schachermayer:1998}) so
one might expect the same to be true for the existence of a supporting
equilibrium. However, following earlier work in this direction (Harrison
and Kreps~\cite{HarrisonKreps1979}, Kreps \cite{Kreps1981}, Ross
\cite{ross2009neoclassical}) in \cite{Jarrow:2012fk} a surprisingly
simple answer is given: an equilibrium supporting a given price process
exists if and only if the price process satisfies the No Free Lunch
with Vanishing Risk (NFLVR) property of Delbaen and Schachermayer,
as well as Merton's No Dominance (ND) condition~\cite{Merton:1973},
see also Jarrow, Protter, Shimbo \cite{Jarrow/Protter/Shimbo:2006,Jarrow/Protter/Shimbo:2010}.
A key implication of this characterization is that informational efficiency
can be rejected without committing to any particular equilibrium model
by merely proving the existence of arbitrage opportunities and/or
dominated strategies. At least for rejecting informational efficiency,
this circumvents the joint model hypothesis problem.

For accepting informational efficiency, however, exhausting all possible
arbitrage opportunities and/or dominated strategies is not practical.
A second characterization of informational efficiency also proven
in \cite{Jarrow:2012fk} can be used in this regard. It is shown that
a supporting equilibrium exists if and only if an equivalent probability
measure exists such that the price process, normalized by the money
market account's value, is a martingale. This is called an \emph{equivalent
martingale measure}. With this characterization, informational efficiency
can be accepted using a \emph{conditional joint model hypothesis}.
Indeed, one first assumes a particular stochastic process for the
asset's price evolution. Then, this evolution is empirically validated/rejected.
If this evolution is validated, then the stochastic processes' parameters
are checked for consistency with the existence of an equivalent martingale
measure. If consistent, then informational efficiency is accepted.
In contrast to the joint model hypothesis, this approach has the advantage
that the assumed evolution can be validated/rejected independently
of informational efficiency.

A crucial assumption in the above formulation is that markets are
frictionless, i.e. there are no transaction costs and trading is unrestricted.
When this assumption provides a reasonable approximation, the above
methodology is appropriate for tesing informational efficiency. Yet,
not for all markets and not for a given market at all times is trading
unrestricted. Indeed, in most emerging markets short selling is impossible
or not allowed (see Bris, Goetzmann, and Zhu \cite{BrisGoetzmannZhu2007}),
and during the recent financial crisis regulators in well developed
markets prohibited short sales to halt declining prices (see Beber
and Pagano \cite{BeberPagano2011}, Boehmer, Jones and Zhang \cite{boehmer2013}).
When trading is constrained, using the above methodology to test for
informational efficiency may lead to false rejections. The rejection
being due to the existence of trading constraints and not market inefficiency.

The purpose of the present paper is to extend the model in \cite{Jarrow:2012fk}
to include trading constraints, more specifically short sale constraints.
We now define a {\em constrained informationally efficient} market
as one where there exists an equilibrium with agents facing short
sale constraints that supports the given price process. This seemingly
modest alteration brings new insights and subtleties in the characterization
of an informationally efficient market.

Our first main result, Theorem~\ref{T:NUPBR}, shows that constrained
informational efficiency implies the existence of a local martingale
deflator, turning asset prices into local martingales. This may come
as a surprise since, due to the short sale constraint, only supermartingale
deflators should be guaranteed. The key insight here is that market
clearing always results in some agent being locally unconstrained,
and this suffices to yield a local martingale deflator. It is well-known
that the existence of a local martingale deflator does not rule out
unconstrained arbitrage. Nonetheless, the above intuition regarding
the role of market clearing suggests that unconstrained arbitrage
strategies are also impossible in equilibrium---indeed, if such a
strategy were to exist, an investor with strictly positive holdings
could add ``a little bit'' of it, and thus do better without violating
the short sale constraint. However, this reasoning fails. In Section~\ref{S:ex2}
we give an example of an equilibrium with a single representative
agent with logarithmic utility, optimally holding the full supply
of the risky asset, but where arbitrage using unconstrained strategies
is nonetheless possible (i.e., the condition $\NA$ fails.)

Less surprisingly, we find that constrained informational efficiency
always implies a constrained version of $\NFLVR$. However, the situation
regarding $\ND$ is less clear. We show that a constrained version
of $\ND$ holds whenever a representative agent equilibrium exists,
but beyond that not much can be said. In fact, our strongest results
are obtained in the setting of constrained representative agent equilibria.
This situation is addressed by our second main result, Theorem~\ref{T:reprchar},
which shows that a given price process is supported by some such equilibrium
if and only if, in addition to the above conditions, there is a local
martingale deflator that turns the price process into a true martingale.
Moreover, in a multi-agent complete market equilibrium, these conditions
will be satisfied whenever all agents have strictly positive holdings
in the risky asset, see Proposition~\ref{P:complete}. Finally, if
a given equilibrium yields a price process that could, theoretically,
be supported by some representative agent equilibrium, it is of interest
to know more about this representative agent. In Theorem~\ref{T:aggr}
we show that the representative agent can always be constructed by
aggregating the individual agent utilities using the well-known procedure
pioneered in Negishi \cite{Negishi1960}.

With respect to a set of necessary and then sufficient conditions
useful for testing constrained informational efficiency, our insights
can be summarized as follows. A constrained informationally efficient
market implies both No Unbounded Profit with Bounded Risk ($\NUPBR$)
and constrained $\NFLVR$. If one can discover trading strategies
violating either condition, then constrained informational efficiency
can be rejected without the joint model hypothesis problem. In the
converse direction, if there exists an equivalent supermartingale
measure and a local martingale deflator that turns the price process
into a martingale, then the market is constrained informationally
efficient. In principle, this sufficient condition can be used, in
conjunction with a validated price process, to accept an informationally
efficient market. Testing for constrained informational efficiency
using these new insights awaits subsequent research.

An outline for this paper is as follows. Section~\ref{S:model} describes
the pricing model and reviews the relevant no-arbitrage type conditions
and their characterizations. Section~\ref{S:eff} introduces the
agents and the notion of equilibrium and constrained informational
efficiency. Necessary conditions for efficiency are discussed. Section~\ref{S:repr}
deals with representative agent equilibria, and a characterization
theorem is given. Aggregation of individual utilities is considered.
Section~\ref{S:ex} contains examples illustrating to what extent
our results are sharp. Section~\ref{S:concl} concludes and lists
some open questions suggested by our analysis.

\section{The Model}

\label{S:model}

We consider a continuous time and continuous trading economy on a
finite horizon. There are a finite number of agents in the economy,
trading in a competitive market with no transaction costs. The market
is constrained in that there are no short sales allowed for risky
assets. Otherwise, it is assumed to be a frictionless market. We are
given a complete filtered probability space $(\Omega,\Fcal,\F,\P)$,
where the filtration $\F=(\Fcal_{t})_{0\le t\le T}$, defined on a
bounded time interval $[0,T]$, satisfies the usual hypotheses. Here
$\P$ is the statistical probability measure. The traders in the economy
have the information set $\F$ and beliefs $\P_{k}$ assumed to be
equivalent to $\P$. It is this information set that is relevant for
the subsequent definition of constrained informational efficiency.%
\footnote{This structure can be extended to allow differential information across
traders analogous to that discussed in \cite{Jarrow:2012fk}, p.~11; see also \cite{Frahm:2013} for related results.
For simplicity of presentation, this extension is not discussed further
herein.%
}

The financial market is assumed to consist of one risk-free asset
and one risky asset. Prices are given in units of the risk-free asset,
so that, in particular, the risk-free asset has a constant price equal
to unity. The price at time $t$~of the risky asset is denoted by~$S_{t}$,
and it is assumed that $S=(S_{t})_{0\le t\le T}$ is a strictly positive
continuous semimartingale with respect to the filtration $\F$.

A (self-financing) \emph{trading strategy} $H$ is an $S$-integrable
predictable process representing the number of shares of the risky
asset held at each point in time. It is called \emph{$a$-admissible}
if $(H\cdot S)_{t}\ge-a$ for all $t\in[0,T]$, and \emph{admissible}
if it is $a$-admissible for some $a\ge0$. It is called \emph{constrained}
if $H_{t}\ge0$ for all $t\in[0,T]$, i.e.~if it satisfies a basic
short sale restriction. We define 
\begin{align*}
\Acal & =\{\text{all admissible strategies }H\},\\
\Acal_{C} & =\{\text{all constrained admissible strategies }H\}.
\end{align*}
Note the slight abuse of terminology: in order to specify the trading
strategy one also needs the number $H_{t}^{0}$ of shares held in
the risk-free asset at each time~$t$. However, since only self-financing
strategies are considered, the wealth process $X_{t}=H_{t}^{0}+H_{t}S_{t}$
satisfies 
\[
X_{t}=x+(H\cdot S)_{t},
\]
where $x=W_{0}$ is the initial capital. This relation determines
$H^{0}$. Note that we do not impose short sale restrictions on the
risk-free asset (see Cuoco \cite{Cuoco:1997} for a setting where
such constraints are covered.)

\subsection{No Arbitrage}

Our primary interest is how short sale constraints affect no-arbitrage
type restrictions in equilibrium. We now review the various notions
of no arbitrage.

\begin{definition} \label{D:arb}
We say
that 
\begin{itemize}
\item $\NA$ ($\NA_{C}$) holds if $(H\cdot S)_{T}\ge0$ implies $(H\cdot S)_{T}=0$
for any $H\in\Acal$ ($H\in\Acal_{C}$); 
\item $\ND$ ($\ND_{C}$) holds if $(H\cdot S)_{T}\ge S_{T}-S_{0}$ implies
$(H\cdot S)_{T}=S_{T}-S_{0}$ for any $H\in\Acal$ ($H\in\Acal_{C}$); 
\item $\NUPBR$ ($\NUPBR_{C}$) holds if the set $\Kcal$ ($\Kcal_{C}$)
is bounded in $\L^{0}$, where 
\begin{align*}
\Kcal & =\{1+(H\cdot S)_{T}:H\text{ is }1\text{-admissible}\},\\
\Kcal_{C} & =\{1+(H\cdot S)_{T}:H\text{ is constrained }1\text{-admissible}\};
\end{align*}

\item $\NFLVR$ ($\NFLVR_{C}$) holds if both $\NA$ and $\NUPBR$ ($\NA_{C}$
and $\NUPBR_{C}$) hold. 
\end{itemize}
\end{definition}

Here $\NA$ stands for No~Arbitrage, $\ND$ for No Dominance, $\NUPBR$
for No Unbounded Profit with Bounded Risk, and $\NFLVR$ for No Free
Lunch with Vanishing Risk. While $\NA$ is an old concept in finance,
$\NFLVR$ was introduced in Delbaen and Schachermayer\cite{Delbaen/Schachermayer:1994}
and then in a more general setting in~\cite{Delbaen/Schachermayer:1998}.
The condition $\NUPBR$ was studied in a general setting in Karatzas
and Kardaras~\cite{Karatzas/Kardaras:2007}. The $\ND$ condition
is perhaps less well-known, but was in fact already introduced in
Merton~\cite{Merton:1973}. Its intuitive economic meaning is that
it is impossible to find a trading strategy which dominates buying
and holding the risky asset itself. We also note that by a well-known
characterization result, see for instance \cite[Proposition~4.2]{Karatzas/Kardaras:2007},
$\NFLVR$ ($\NFLVR_{C}$) as defined above is equivalent to the condition
that there is no sequence $H^{n}\in\Acal$ ($H^{n}\in\Acal_{C}$)
such that $f_{n}=(H^{n}\cdot S)_{T}$ satisfies $\lim_{n}\|\max(-f_{n},0)\|_{\L^{\infty}}=0$
and $\lim_{n}f_{n}=f$ almost surely for some $f\ge0$ with $\P(f>0)>0$.

The following notion plays in important role in the analysis of the
$\ND$ condition, among other things.

\begin{definition} A strategy $H\in\Acal$ ($H\in\Acal_{C}$)
is called \emph{maximal} (\emph{C-maximal}) if $(K\cdot S)_{T}\ge(H\cdot S)_{T}$
implies $(K\cdot S)_{T}=(H\cdot S)_{T}$ for any $K\in\Acal$ ($K\in\Acal_{C}$).
\end{definition}

We can now phrase the $\ND$ condition as saying that holding a constant
number of shares in the risky asset is a maximal strategy. Similarly,
the $\NA$ condition means that investing all one's wealth in the
risk-free asset is a maximal strategy.

\subsection{Dual Characterizations}

The various meanings of no arbitrage appearing in Definition~\ref{D:arb}
can be characterized in terms of equivalent probability measures,
or, more generally, deflator processes. In order to state these results
we introduce the following sets. 
\begin{align*}
\Mcal & =\{\Q\sim\P:S\text{ is a }\Q\text{-martingale}\}\\
\Mcal_{{\rm loc}} & =\{\Q\sim\P:S\text{ is a local }\Q\text{-martingale}\}\\
\Mcal_{{\rm sup}} & =\{\Q\sim\P:S\text{ is a }\Q\text{-supermartingale}\}\\
\Dcal_{{\rm loc}} & =\{Y:\text{ \ensuremath{Y}is càdlàg adapted, \ensuremath{Y\ge0}, \ensuremath{Y_{0}=1}, and \ensuremath{Y(1+H\cdot S)}is}\\
 & \qquad\qquad\text{a local martingale for every \ensuremath{1}-admissible \ensuremath{H}}\}\\
\Dcal_{{\rm sup}} & =\{Y:\text{ \ensuremath{Y}is càdlàg adapted, \ensuremath{Y\ge0}, \ensuremath{Y_{0}=1}, and \ensuremath{Y(1+H\cdot S)}is}\\
 & \qquad\qquad\text{a supermartingale for every constrained \ensuremath{1}-admissible \ensuremath{H}}\}.
\end{align*}
The elements of $\Mcal$ are called \emph{equivalent martingale measures}.
The elements of $\Mcal_{{\rm loc}}$ and $\Mcal_{{\rm sup}}$ are
called \emph{equivalent local martingale (supermartingale) measures},
respectively, whereas the elements of $\Dcal_{{\rm loc}}$ ($\Dcal_{{\rm sup}}$)
are called \emph{local martingale (supermartingale) deflators}. If
we identify a measure $\Q\sim\P$ with its Radon-Nikodym density process,
we have the following inclusions 
\[
\Mcal\subset\Mcal_{{\rm loc}}\subset\Mcal_{{\rm sup}}\subset\Dcal_{{\rm sup}}\quad\text{and}\quad\Mcal\subset\Mcal_{{\rm loc}}\subset\Dcal_{{\rm loc}}\subset\Dcal_{{\rm sup}}.
\]

The no arbitrage type conditions from Definition~\ref{D:arb} have
the following characterizations.

\begin{theorem} \label{T:NAchar} The following assertions hold. 
\begin{enumerate}
\item \label{T:NAchar:1} $\NUPBR$ ($\NUPBR_{C}$) holds if and only if
$\Dcal_{{\rm loc}}\ne\emptyset$ ($\Dcal_{{\rm sup}}\ne\emptyset$); 
\item \label{T:NAchar:2} $\NFLVR$ ($\NFLVR_{C}$) holds if and only if
$\Mcal_{{\rm loc}}\ne\emptyset$ ($\Mcal_{{\rm sup}}\ne\emptyset$); 
\item \label{T:NAchar:3} $\NFLVR$ and $\ND$ hold if and only if $\Mcal\ne\emptyset$. 
\end{enumerate}
\end{theorem}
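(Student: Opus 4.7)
The plan is to deduce each equivalence from a duality argument; the unconstrained assertions in (i) and (ii) are classical fundamental theorems, while the constrained counterparts and part (iii) require identifying the right class of dual objects. Throughout, the easy direction in (i) works uniformly: if $Y$ lies in $\Dcal_{\mathrm{loc}}$ (resp.\ $\Dcal_{\mathrm{sup}}$) and $H$ is $1$-admissible (resp.\ constrained $1$-admissible), then $Y(1+H\cdot S)$ is a nonnegative supermartingale starting at $1$, so $E[Y_T(1+(H\cdot S)_T)] \le 1$, and since $Y_T > 0$ almost surely this bounds $\Kcal$ (resp.\ $\Kcal_C$) in $\L^0$. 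The converse for $\NUPBR$ is the Karatzas--Kardaras theorem~\cite{Karatzas/Kardaras:2007}; in the constrained case the same exhaustion/Fatou-closure construction applied to the smaller cone $\Kcal_C$ produces a strictly positive c\`adl\`ag $Y$ making $Y(1+H\cdot S)$ a supermartingale for every constrained $1$-admissible $H$. One only obtains a supermartingale (not a local martingale) because the short-sale restriction forbids testing against~$-H$.

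For (ii), the unconstrained direction is the Delbaen--Schachermayer FTAP \cite{Delbaen/Schachermayer:1994,Delbaen/Schachermayer:1998}. For the constrained version I would combine the supermartingale deflator from (i) with a Kreps--Yan separation applied to the $\L^\infty$-closure of the convex cone $\Ccal_C = (\{(H\cdot S)_T : H \in \Acal_C\} - \L^0_+) \cap \L^\infty$. Under $\NFLVR_C$ this closure meets $\L^\infty_+$ only at zero, so Yan's theorem yields $\Q \sim \P$ with $E_\Q[f] \le 0$ for all $f \in \Ccal_C$. Testing against strategies of the form $H_t = \mathbf{1}_A \mathbf{1}_{\{\sigma < t \le \tau\}}$ with stopping times $\sigma \le \tau$ and $A \in \Fcal_\sigma$ (such $H$ lies in $\Acal_C$ since the integrand is nonnegative) gives $E_\Q[S_\tau \mathbf{1}_A] \le E_\Q[S_\sigma \mathbf{1}_A]$, i.e.\ $S$ is a $\Q$-supermartingale. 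The reverse implication is elementary: for $\Q \in \Mcal_{\mathrm{sup}}$ and any constrained admissible $H$, the stochastic integral $H\cdot S$ is a $\Q$-local supermartingale bounded below, hence a true $\Q$-supermartingale, which precludes $\NFLVR_C$-violating sequences.

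For (iii), the dominance condition $\ND$ says precisely that $H \equiv 1$ is maximal in $\Acal$. A classical result (Delbaen--Schachermayer; see also \cite{Jarrow/Protter/Shimbo:2006,Jarrow/Protter/Shimbo:2010}) asserts that under $\NFLVR$ a strategy $H$ is maximal if and only if there exists $\Q \in \Mcal_{\mathrm{loc}}$ such that $H\cdot S$ is a true $\Q$-martingale; applying this to $H \equiv 1$ yields $\Q$ under which $S$ itself is a true martingale, i.e.\ $\Q \in \Mcal$. Conversely, $\Mcal \ne \emptyset$ implies $\NFLVR$ by (ii), and $\ND$ follows because under $\Q \in \Mcal$ any admissible $K$ with $(K\cdot S)_T \ge S_T - S_0$ satisfies $E_\Q[(K\cdot S)_T] \le 0 = E_\Q[S_T - S_0]$ (the upper bound coming from $K\cdot S$ being a $\Q$-supermartingale), forcing equality.

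The main obstacle is the constrained halves of (i) and (ii): the standard Delbaen--Schachermayer and Karatzas--Kardaras arguments exploit bidirectional admissibility and produce local-martingale dual objects, whereas short-sale constraints force us to work with one-sided admissibility and settle for supermartingale duals. The fix is to restrict test strategies throughout to the nonnegative cone and to close $\Kcal_C$ under Fatou convergence to extract the required deflator or measure.
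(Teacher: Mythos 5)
The paper disposes of this theorem purely by citation: part (i) to Karatzas--Kardaras, part (ii) to Delbaen--Schachermayer (with Karatzas--Kardaras for the constrained half), and part (iii) to Jarrow--Larsson. Your sketch follows exactly the same classical route --- the easy ``deflator $\Rightarrow$ $\L^0$-boundedness'' direction, the hard converses delegated to the same references, and part (iii) via the Delbaen--Schachermayer maximality characterization applied to $H\equiv 1$, which is precisely how the cited Jarrow--Larsson result is proved --- so in outline there is no divergence, and your converse of (iii) (supermartingale bound under $\Q\in\Mcal$ forcing equality) is correct.

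Two steps in your constrained part (ii), where you go beyond citation, are not yet proofs as written. First, Kreps--Yan separation requires the cone $\Ccal_C$ to be weak-$*$ closed in $\L^\infty$, not merely norm closed, and establishing this closedness under $\NFLVR_C$ is the entire analytic content of the fundamental theorem; asserting ``under $\NFLVR_C$ this closure meets $\L^\infty_+$ only at zero'' for the relevant closure is exactly the hard part, which is why the paper simply cites the constrained FTAP rather than re-deriving it. Second, your test strategies $H=\1{A}\oo_{\rc\sigma,\tau\rc}$ with $A\in\Fcal_\sigma$ are constrained but in general \emph{not} admissible: $(H\cdot S)_t=\1{A}(S_{t\wedge\tau}-S_{t\wedge\sigma})\ge -\1{A}S_\sigma$, and $S_\sigma$ need not be bounded, so there is no uniform lower bound. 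This is fixable --- intersect $A$ with $\{S_\sigma\le n\}$ or stop at the first time $S$ exceeds level $n$, obtain the supermartingale inequality for each $n$, and let $n\to\infty$ by monotone convergence/Fatou --- but as stated the step fails. A smaller point: your $\L^0$-boundedness argument in (i) uses $Y_T>0$ a.s., whereas the paper's definition of $\Dcal_{\rm loc}$, $\Dcal_{\rm sup}$ only requires $Y\ge 0$; the standard (Karatzas--Kardaras) deflators are strictly positive, so you should either note that you work with strictly positive elements or justify strict positivity before invoking it.
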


\begin{proof} Part~{\ref{T:NAchar:1}} was proved in~\cite{Karatzas/Kardaras:2007},
Part~{\ref{T:NAchar:2}} in~\cite{Delbaen/Schachermayer:1994}
(see \cite{Karatzas/Kardaras:2007} for the constrained case), and
Part~{\ref{T:NAchar:3}} in~\cite{Jarrow:2012fk}. \end{proof}

\section{Informational Efficiency}

\label{S:eff}

We are interested in economies populated by agents who trade in the
risky and risk-free assets in order to maximize expected utility from
terminal wealth. We assume that the risky asset is available in unit
net supply, while the risk-free asset is available in zero net supply.
The agents face short sale restrictions on the risky asset, and therefore
solve optimization problems of the form 
\begin{equation}
u(x)=\sup_{H\in\Acal_{C}}\,\E\left[U(x+(H\cdot S)_{T})\right].\label{eq:primal}
\end{equation}
Here $x>0$ is the initial wealth, and $U:\Omega\times(0,\infty)\to\R$
is a {\em stochastic utility function}. By this we mean that $U$
is $\Fcal_{T}\otimes\Bcal(\R)$-measurable, $\E[|U(x)|]<\infty$ for
all $x>0$, and almost surely, $x\mapsto U(x)$ is continuously differentiable,
strictly increasing, strictly concave, with $\lim_{x\to\infty}U(x)=\infty$.
It also satisfies Inada conditions at zero and infinity: $\lim_{x\downarrow0}U'(x)=\infty$,
$\lim_{x\to\infty}U'(x)=0$. Its domain of definition is extended
to all of~$\R$ by setting $U(x)=-\infty$ for $x\le0$. We use the
convention that $\E[U(x+(H\cdot S)_{T})]=-\infty$ whenever the negative
part of $U(x+(H\cdot S)_{T})$ has infinite expectation, even if the
positive part also has infinite expectation.

The maximization problem~\eqref{eq:primal} has an associated family
of dual problems, 
\begin{equation}
v(y)=\sup_{Y\in\Dcal{\rm sup}}\,\E\left[V(yY_{T}))\right],\label{eq:dual}
\end{equation}
where $V(y)=\sup_{x>0}(U(x)-xy)$ is the conjugate of $U$, and $y>0$.
There is a large literature where existence and uniqueness of solutions
to \eqref{eq:primal}--\eqref{eq:dual} is established, and the optimizers
characterized, under a variety of assumptions on $U$ and $S$. Rather
than committing to a specific set of hypotheses, we will simply assume
that our utility functions are sufficiently well-behaved that the
main conclusions of the duality theory of \eqref{eq:primal}--\eqref{eq:dual}
are valid. The main reasons for doing this are to avoid imposing artificial
conditions on our utility functions, and to help emphasize which properties
are, in fact, important for the subsequent results.

\begin{definition} \label{D:viable} We say that a stochastic utility
function $U$ is {\em viable} if, whenever the conditions 
\begin{equation}
\Mcal_{{\rm sup}}\ne\emptyset,\qquad\text{\ensuremath{u(x)\in\R}for some \ensuremath{x>0},}\qquad\text{\ensuremath{v(y)\in\R}for some \ensuremath{y>0}}\label{eq:viable}
\end{equation}
hold, we have 
\begin{enumerate}
\item \label{D:viable:1} $u$ and $v$ are conjugate: $v(y)=\sup_{x>0}(u(x)-xy)$
and $u(x)=\inf_{y>0}(v(y)+xy)$. Furthermore, they are continuously
differentiable on~$(0,\infty)$. 
\item \label{D:viable:2} For each $x>0$, the primal problem~\eqref{eq:primal}
has an optimal solution $\widehat{H}\in\Acal_{C}$. If $H$ is another
optimal solution, then $(H\cdot S)_{T}=(\widehat{H}\cdot S)_{T}$. 
\item \label{D:viable:3} For each $y>0$, the dual problem~\eqref{eq:dual}
has a unique optimal solution $\widehat{Y}\in\Dcal_{{\rm sup}}$. 
\item \label{D:viable:4} If $x$ and $y$ are related via $y=u'(x)$, then
the corresponding optimal solutions satisfy 
\[
x+(\widehat{H}\cdot S)_{T}=I(y\widehat{Y}_{T}),
\]
where $I=(U')^{-1}$. Moreover, $\widehat{Y}(x+\widehat{H}\cdot S)$
is a martingale. 
\end{enumerate}
\end{definition}

Conditions under which viability holds are available in the extensive
literature on utility maximization. For example, Karatzas and Zitkovic
\cite{Karatzas/Zitkovic:2003} give conditions that are applicable
to the current setting (and allow the condition on $v$ in~\eqref{eq:viable}
to be dropped.) The seminal paper in the unconstrained case is Kramkov
anad Schachermayer \cite{Kramkov/Schachermayer:1999}, and a more
general characterization of viability is given in Mostovy~\cite{Mostovyi2012}.
The latter paper suggests that in general one should include the condition
on $v$ in~\eqref{eq:viable}.

Let us briefly remark on the possibility of having stochastic utility
functions. This is convenient for several reasons. First, it allows
us to cover our assumption of heterogeneous beliefs $\P^{*}\sim\P$
(replace $U(x)$ by $ZU(x)$, where $Z\d\P=\d\P^{*}$.) Second, the
initial discounting of the prices is without loss of generality (replace
$U(x)$ by $U(\exp(\int_{0}^{T}r_{t}\d t)x)$, where $r$ is the interest
rate.) Third, the assumption that utility depends on final wealth
can be replaced by the assumption that utility arises from consuming,
at time $T$, a good whose unit price is $\psi$ (replace $U(x)$
by $U(x\psi)$.) Of course, the above statements all come with the
caveat that viability must be preserved under the respective modifications
of the utilities.

We now consider our $n$ agents, indexed by $k=1,\ldots,n$. Each
agent is equipped with a viable stochastic utility function $U_{k}$
and some initial wealth $x_{k}>0$. In equilibrium, the agent will
follow some trading strategy $\widehat{H}^{k}\in\Acal_{C}$. We refer
to $(U_{k},x_{k},\widehat{H}^{k}:k=1,\ldots,n)$ as the {\em constrained
agent characteristics}. We use the following standard notion of an
economic equilibrium.

\begin{definition} \label{D:equil} A \emph{constrained equilibrium}
is a financial market $S$ together with constrained agent characteristics
$(U_{k},x_{k},\widehat{H}^{k}:k=1,\ldots,n)$ such that 
\begin{enumerate}
\item \label{D:equil:opt} individual optimality holds: For each $k$, $\widehat{H}^{k}$
is an optimal solution to~\eqref{eq:primal} with $U=U_{k}$ and
$x=x_{k}$, whose optimal value is finite. It is also required that
the dual value function is finite for some~$y$; 
\item \label{D:equil:clear} markets clear: $\widehat{H}_{t}^{1}+\cdots+\widehat{H}_{t}^{n}=1$
for all $t\in[0,T]$ and $x_{1}+\cdots+x_{n}=S_{0}$. 
\end{enumerate}
\end{definition}

The aggregate wealth in the economy at time zero is $S_{0}$, since
the risky asset is in unit net supply with initial price $S_{0}$,
and the risk-free asset is in zero net supply. This explains the form
of the market clearing condition for the initial wealth levels.

Note also that the market clearing condition {\ref{D:equil:clear}}
automatically implies that market clearing holds for the risk-free
asset. Indeed, the holdings of investor~$k$ in the risk-free asset
is $\widehat{H}_{t}^{0,k}=x_{k}+(\widehat{H}^{k}\cdot S)_{t}-\widehat{H}_{t}^{k}S_{t}$
by the self-financing property. Therefore, 
\[
{\textstyle \sum_{k}\widehat{H}_{t}^{0,k}=\sum_{k}x_{k}+(1\cdot S)_{t}-S_{t}=S_{0}+S_{t}-S_{0}-S_{t}=0.}
\]

We can now introduce the notion of {\em constrained informational
efficiency}. It generalizes the notion of informational efficiency
discussed in~\cite{Jarrow:2012fk} to the case where short sale restrictions
are present.

\begin{definition} \label{D:efficient} A financial market $S$ is
called {\em constrained informationally efficient} if it supported
by some constrained equilibrium, that is, if there exist constrained
agent characteristics $(U_{k},x_{k},\widehat{H}^{k}:k=1,\ldots,n)$
which together with $S$ form a constrained equilibrium. \end{definition}

Our aim is now to generate necessary conditions for constrained informational
efficiency. Sufficient conditions will be investigated in Section~\ref{S:repr},
where the stronger property of the existence of a representative agent
plays a key role.

\begin{lemma} \label{L:NFLVRC} Assume that for some $x>0$ the maximization
problem~\eqref{eq:primal} has an optimal solution $\widehat{H}$
with finite optimal value. Then 
\begin{enumerate}
\item \label{L:NFLVRC:1} $\widehat{H}$ is $C$-maximal; 
\item \label{L:NFLVRC:2} $\NFLVR_{C}$ holds. 
\end{enumerate}
\end{lemma}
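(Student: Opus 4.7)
For part~(i), I would argue directly from the strict monotonicity of~$U$. Suppose some $K \in \Acal_C$ satisfied $(K\cdot S)_T \ge (\widehat{H}\cdot S)_T$ with strict inequality on a set of positive probability. Then $U(x + (K\cdot S)_T) \ge U(x + (\widehat{H}\cdot S)_T)$ pointwise, strictly on that set; since the optimal expected utility is finite, integrating yields $\E[U(x + (K\cdot S)_T)] > \E[U(x + (\widehat{H}\cdot S)_T)]$, contradicting the optimality of~$\widehat{H}$.

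For part~(ii), I argue by contradiction. Assume $\NFLVR_C$ fails. By the sequential characterization recalled just after Definition~\ref{D:arb}, there is a sequence $H^n \in \Acal_C$ of constrained $1$-admissible strategies with $f_n := (H^n\cdot S)_T$, $\delta_n := \|f_n^-\|_{\L^\infty} \to 0$, and $f_n \to f$ almost surely for some $f \ge 0$ with $\P(f > 0) > 0$. The key structural observation, enabled precisely by the short sale constraint, is that the perturbed strategy $K^n := \widehat{H} + H^n$ still lies in~$\Acal_C$: non-negativity is preserved under addition, and $K^n$ is $(a+1)$-admissible whenever $\widehat{H}$ is $a$-admissible. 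Its terminal wealth is $W + f_n$, where $W := x + (\widehat{H}\cdot S)_T$ denotes the optimal wealth; because the primal value is finite, $W > 0$ almost surely (otherwise the convention $U \equiv -\infty$ on $(-\infty,0]$ would force $\E[U(W)] = -\infty$). The desired contradiction will follow from $\E[U(W + f_n)] > \E[U(W)]$ for all large~$n$.

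To establish this strict improvement, I would decompose
$$U(W + f_n) - U(W) = \bigl[U(W + f_n) - U(W - \delta_n)\bigr] - \bigl[U(W) - U(W - \delta_n)\bigr],$$
both brackets being non-negative by monotonicity of~$U$ (using $f_n \ge -\delta_n$). The first bracket converges pointwise to $U(W + f) - U(W)$, so Fatou's lemma gives $\liminf_n \E[\,\cdot\,] \ge \E[U(W + f) - U(W)] > 0$, the strict positivity coming from $\P(f > 0) > 0$ together with strict monotonicity of~$U$. The main obstacle is the second bracket: it vanishes pointwise, but because $W$ can take values arbitrarily close to~$0$ the expectation $\E[U(W - \delta_n)]$ might \emph{a~priori} be $-\infty$. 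I would circumvent this by rescaling, using $\widehat{H} + \alpha_n H^n \in \Acal_C$ with $\alpha_n > 0$ chosen by a diagonal argument so that $\alpha_n \delta_n$ shrinks fast enough (relative to the distribution of $W$ near the origin) to force $\E[U(W) - U(W - \alpha_n \delta_n)] \to 0$ via monotone convergence on $\{W > \alpha_n \delta_n\}$, while $\alpha_n$ stays bounded away from zero so the Fatou bound on the first bracket remains strictly positive in the limit. Combining the two estimates produces $\E[U(W + \alpha_n f_n)] > \E[U(W)]$ for $n$ large, the desired contradiction, and hence $\NFLVR_C$ holds.
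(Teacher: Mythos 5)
Part~(i) of your proposal is exactly the paper's argument and is fine. The issue is part~(ii). The paper proves $\NFLVR_{C}$ by splitting it into its two components: $\NA_{C}$ follows at once from $C$-maximality, because a candidate arbitrage $K\in\Acal_{C}$ has \emph{nonnegative} terminal payoff, so $\widehat{H}+K$ is constrained admissible and dominates $\widehat{H}$; and $\NUPBR_{C}$ is obtained by adapting Karatzas--Kardaras: if $\Kcal_{C}$ were unbounded in $\L^{0}$, one evaluates the utility of the unbounded sequence of constrained $1$-admissible strategies \emph{on their own}, started from initial capital $1+\varepsilon$ (so wealth stays above $\varepsilon>0$), and concludes $u(1+\varepsilon)=\infty$, contradicting finiteness of the optimal value via concavity of $u$. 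Your route---perturbing $\widehat{H}$ by a free-lunch sequence $H^{n}$ and comparing $\E[U(W+\alpha_n f_n)]$ with $\E[U(W)]$---is genuinely different, but it has a gap precisely where you sense trouble.

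The gap: nothing in the hypotheses prevents $\essinf W=0$, where $W=x+(\widehat{H}\cdot S)_T$ (finiteness of the optimal value only gives $W>0$ a.s., and the admissibility bound $-a$ for $\widehat{H}$ may satisfy $a\ge x$). If $\essinf W=0$, then for every $\alpha$ bounded away from zero the set $\{W\le\alpha\delta_n\}$ has positive probability, and on it the perturbed terminal wealth $W+\alpha f_n$ can be $\le 0$; by the paper's convention the perturbed strategy then has expected utility $-\infty$, so no contradiction with optimality is produced at all. The same set also invalidates your decomposition, since $U(W-\alpha\delta_n)=-\infty$ there, and even off that set the error term is not controlled: $\E[U(W)-U(W-\alpha_n\delta_n)]$ can equal $+\infty$ for every $n$ (indeed it does whenever $\P(W\le\alpha_n\delta_n)>0$), and restricting to $\{W>\alpha_n\delta_n\}$ does not yield a monotone family, so the claimed ``monotone convergence'' step is not available; one would need a uniform-integrability argument that fails for distributions of $W$ concentrating mass near zero. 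Since you need $\alpha_n$ bounded away from zero to keep the Fatou lower bound strictly positive, the diagonal rescaling cannot repair this. (A minor additional inaccuracy: the sequential characterization quoted after Definition~\ref{D:arb} gives $H^{n}\in\Acal_{C}$ with $\|f_n^-\|_{\L^\infty}\to0$, not $1$-admissibility of $H^{n}$; this is harmless for admissibility of $\widehat{H}+H^{n}$ but your ``$(a+1)$-admissible'' claim is not justified.) The moral of the paper's proof is that one should only ever add payoffs that are nonnegative (for $\NA_{C}$), and handle the unboundedness part without touching $\widehat{H}$ at all.
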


\begin{proof} {\ref{L:NFLVRC:1}}: Let $K$ be a constrained admissible
strategy with $\Delta:=(K\cdot S)_{T}-(\widehat{H}\cdot S)_{T}\ge0$.
Since $\widehat{H}$ gives finite utility and since $U_{k}$ strictly
increasing, $K$ would yield a strict improvement if $\P(\Delta>0)>0$.
Hence $\Delta=0$, and $\widehat{H}$ is $C$-maximal.

{\ref{L:NFLVRC:2}}: We need to establish $\NA_{C}$ and $\L^{0}$-boundedness
of the set $\Kcal_{C}$ in Definition~\ref{D:arb}. The argument
is well-known. First, suppose $K$ is a constrained admissible strategy
with $(K\cdot S)_{T}\ge0$. Then $\widehat{H}+K$ is also constrained
admissible, and we have $((\widehat{H}+K)\cdot S)_{T}\ge(\widehat{H}\cdot S)_{T}$.
But $\widehat{H}$ is $C$-maximal by Part~$(i)$, so this inequality
must be an equality. Hence $(K\cdot S)_{T}=0$, and we deduce~$\NA_{C}$.
To prove that $\Kcal_{C}$ is bounded in~$\L^{0}$, we adapt~\cite[Proposition~4.19]{Karatzas/Kardaras:2007}
to the case with stochastic utility functions. Assume for contradiction
that $\Kcal_{C}$ is not bounded in~$\L^{0}$. Then we can find $\varepsilon>0$
and constrained $1$-admissible strategies $H^{n}$, $n\in\N$, such
that $\P(1+(H^{n}\cdot S)_{T}>n)>\varepsilon$. Thus, 
\[
u(\varepsilon+1)\ge\E\left[U(\varepsilon+1+(H^{n}\cdot S)_{T})\right]\ge-\E\left[|U(\varepsilon)|\right]+\left[U(\varepsilon+1+n)\1{1+(H^{n}\cdot S)_{T}>n}\right].
\]
Since $U(\varepsilon+1+n)\to\infty$ almost surely, we can find, for
any $m>0$, some large $n$ such that $\P(U(\varepsilon+1+n)>m)>1-\varepsilon/2$.
Then, 
\[
U(\varepsilon+1+n)\1{1+(H^{n}\cdot S)_{T}>n}\ge m\1{1+(H^{n}\cdot S)_{T}>n\text{ and }U(\varepsilon+1+n)>m}-|U(\varepsilon)|,
\]
so that 
\begin{align*}
u(\varepsilon+1) & \ge-2\,\E\left[|U(\varepsilon)|\right]+m\P\left(1+(H^{n}\cdot S)_{T}>n\text{ and }U(\varepsilon+1+n)>m\right)\\
 & \ge-2\,\E\left[|U(\varepsilon)|\right]+m\frac{\varepsilon}{2}.
\end{align*}
It follows that $u(\varepsilon+1)=\infty$, which by concavity of
$u$ implies $u(x)=\infty$. This is the desired contradiction. \end{proof}

We are now ready to give our first main result, which states that
in a constrained equilibrium there is in fact a local martingale deflator.
\emph{A priori} one would only expect a supermartingale deflator to
exist. However, market clearing implies that at every point in time,
there must be some investor who optimally holds a positive number
of shares in the risky asset. Such an investor is locally unconstrained,
and this forces the price to behave, in a qualitative way, as it would
in an unconstrained economy---otherwise the investor would reduce
his holdings, contradicting optimality.

\begin{theorem} \label{T:NUPBR} Let $(S;\ U_{k},x_{k},\widehat{H}^{k}:k=1,\ldots,n)$
be a constrained equilibrium. Then $\Dcal_{{\rm loc}}$ is non-empty.
\end{theorem}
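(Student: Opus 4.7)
The plan is to leverage the dual optimizers from each agent's viable utility maximization together with market clearing. First, Lemma~\ref{L:NFLVRC} gives $\NFLVR_C$, so Theorem~\ref{T:NAchar}(ii) yields $\Mcal_{\rm sup}\neq\emptyset$ and in particular $\Dcal_{\rm sup}\neq\emptyset$; combined with the finite primal and dual values built into Definition~\ref{D:equil}(i), the viability of each $U_k$ (Definition~\ref{D:viable}) supplies a dual optimizer $\widehat{Y}^k\in\Dcal_{\rm sup}$ such that $\widehat{Y}^k(x_k+\widehat{H}^k\cdot S)$ is a true martingale. Since $S$ is a continuous semimartingale, a short integration-by-parts computation shows that $Y\in\Dcal_{\rm loc}$ is equivalent to $Y\ge 0$, $Y_0=1$, and both $Y$ and $YS$ being local martingales, so the task reduces to exhibiting such a~$Y$.

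The central observation is a complementary-slackness phenomenon drawn from the first-order conditions. For any agent~$k$, any $\delta>0$, any predictable set $B\subset\{\widehat{H}^k\ge\delta\}$, and any bounded predictable process $H$ (possibly sign-changing), the perturbed strategy $\widehat{H}^k\pm\varepsilon H\mathbf{1}_B$ stays in $\Acal_C$ for sufficiently small $\varepsilon>0$. The resulting two-sided first-order condition, together with the martingale identity from viability, forces the semimartingale drift of $\widehat{Y}^k S$ (a supermartingale, since $\widehat{Y}^k\in\Dcal_{\rm sup}$ applied with $K=1$) to be supported on $\{\widehat{H}^k=0\}$. Market clearing $\sum_k\widehat{H}^k_t=1$ then guarantees $\bigcup_k\{\widehat{H}^k>0\}$ covers $[0,T]\times\Omega$ up to a null set: at each $(t,\omega)$ at least one agent's dual is ``instantaneously drift-free''.

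The final step is to promote this pointwise information into a single global process $Y\in\Dcal_{\rm loc}$. The natural approach is to select, via a measurable-selection argument, a predictable partition of $[0,T]\times\Omega$ into pieces on which some specific $\widehat{Y}^{k_j}$ has vanishing drift, and paste the pieces together at the switching stopping times with multiplicative renormalizations, obtaining a positive semimartingale $Y$ with $Y_0=1$ whose drift vanishes throughout. Then both $Y$ and $YS$ are local martingales, giving $Y\in\Dcal_{\rm loc}$. The principal obstacle is this pasting step: one must verify that the multiplicative renormalizations at switching times preserve positivity, remain c\`adl\`ag, and do not introduce spurious drift or singular quadratic-variation terms. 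A shortcut is available whenever some single agent has $\widehat{H}^k>0$ throughout---then $Y=\widehat{Y}^k$ works directly---so the real difficulty concentrates on agents whose holdings drop to zero on proper subsets.
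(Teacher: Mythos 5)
Your outline has the same architecture as the paper's proof (agent-by-agent ``no drift where holdings are positive'', market clearing to cover $[0,T]\times\Omega$, then combine into one deflator), but the two steps you leave open are exactly where the work lies, and as sketched they do not go through. First, membership in $\Dcal_{{\rm loc}}$ requires the deflator \emph{itself} to be a local martingale (take $H=0$), yet your complementary-slackness argument only controls the drift of $\widehat{Y}^kS$; nothing in your proposal rules out that a dual optimizer $\widehat{Y}^k$ carries a nontrivial decreasing part of its own, in which case pasting pieces of the $\widehat{Y}^k$ could never yield an element of $\Dcal_{{\rm loc}}$, and strict positivity (needed for any multiplicative gluing) is also never addressed. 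The paper closes this with Lemma~\ref{L:doptlm}: dual optimality in~\eqref{eq:dual} forces $\widehat{Y}^k=\Ecal(-N^k)$ to be a strictly positive local martingale (positivity from the Inada condition via $I(0)=\infty$ and viability~{\ref{D:viable:4}}). With that in hand, the drift identity you want needs no perturbation argument at all: since $\widehat{Y}^k(x_k+\widehat{H}^k\cdot S)$ is a true martingale by viability~{\ref{D:viable:4}}, integration by parts exhibits its predictable finite-variation part as $(\widehat{Y}^k_-\widehat{H}^k)\cdot(A-[N^k,M])$, which must vanish, giving $\1{\widehat{H}^k>0}\cdot(A-[N^k,M])=0$. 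Your two-sided perturbation, besides being unexecuted, has its own admissibility and finite-utility issues (a bounded perturbation of $\widehat{H}^k$ need not keep the wealth bounded below or the expected utility finite), and in any case it targets only $\widehat{Y}^kS$.

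Second, the pasting step you yourself flag as the ``principal obstacle'' cannot be carried out at the level of the processes $\widehat{Y}^k$ by renormalizing at switching stopping times: the sets $\{\widehat{H}^k>0\}$ are merely predictable and in general are not unions of stochastic intervals, so there are no switching times to renormalize at. The paper glues at the level of stochastic logarithms instead: with the predictable partition $D_1=\{\widehat{H}^1>0\}$, $D_k=\{\widehat{H}^k>0\}\setminus(D_1\cup\cdots\cup D_{k-1})$ (which covers $[0,T]\times\Omega$ by market clearing, so no measurable selection is needed), one sets $N=\oo_{D_1}\cdot N^1+\cdots+\oo_{D_n}\cdot N^n$ and $Y=\Ecal(-N)$; then $A-[N,M]=\sum_k\oo_{D_k}\cdot(A-[N^k,M])=0$, so $Y>0$, $Y_0=1$, and both $Y$ and $YS=S\cdot Y+Y\cdot M$ are local martingales, whence $Y\in\Dcal_{{\rm loc}}$ by the Ansel--Stricker argument you cite. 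So the missing ideas are precisely the content of the paper's proof: local martingality and strict positivity of the dual optimizers, and the gluing of their stochastic logarithms rather than of the deflators themselves.
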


It is a striking fact that the above heuristic motivation for Theorem~\ref{T:NUPBR}
fails if we are interested in the $\NA$ property rather than the
$\NUPBR$ property. Indeed, in Section~\ref{S:ex2} we provide an
example where both $\NFLVR_{C}$ and $\NUPBR$ hold, $\NA$ fails,
and a strictly positive strategy $\widehat{H}$ (in fact, the constant
strategy $\widehat{H}\equiv1$) is optimal for~\eqref{eq:primal}.
In other words, even though the optimal strategy is strictly positive,
the short sale constraint is binding in the sense that an unconstrained
agent would choose a very different strategy.

An immediate consequence of Theorem~\ref{T:NUPBR} and Lemma~\ref{L:NFLVRC}{\ref{L:NFLVRC:2}}
is the following.

\begin{corollary} A constrained informationally efficient market~$S$
necessarily satisfies both $\NFLVR_{C}$ and $\NUPBR$. \end{corollary}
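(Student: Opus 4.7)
The plan is to exhibit an element of $\Dcal_{{\rm loc}}$ by combining the individual agents' dual optimizers, with market clearing supplying the key fact that the short sale constraint is locally non-binding for at least one agent at every instant.

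For each agent $k$, I apply viability of $U_{k}$ (Definition~\ref{D:viable}): the primal and dual finiteness hypotheses are both explicit in Definition~\ref{D:equil}, while $\Mcal_{{\rm sup}}\neq\emptyset$ follows from Lemma~\ref{L:NFLVRC} and Theorem~\ref{T:NAchar}. This yields a unique dual optimizer $\widehat{Y}^{k}\in\Dcal_{{\rm sup}}$ and a true $\P$-martingale $M^{k}:=\widehat{Y}^{k}W^{k}$, where $W^{k}:=x_{k}+\widehat{H}^{k}\cdot S$ is agent $k$'s optimal wealth. Each $\widehat{Y}^{k}$ is \emph{a priori} only a supermartingale deflator on the constrained cone; the aim is to upgrade one of them, or a suitable combination, into a local martingale deflator for \emph{all} strategies.

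The clearing condition $\sum_{k}\widehat{H}^{k}\equiv 1$ together with $\widehat{H}^{k}\geq 0$ guarantees that at every $(t,\omega)$ some agent $k$ satisfies $\widehat{H}^{k}_{t}(\omega)\geq 1/n$, so the stopping times $\tau^{k}_{\delta}:=\inf\{t:\widehat{H}^{k}_{t}\leq\delta\}$ cover $\lc 0,T\rc=\bigcup_{k}\lc 0,\tau^{k}_{\delta}\rc$ for every $\delta<1/n$. On $\lc 0,\tau^{k}_{\delta}\rc$ the perturbation $\widehat{H}^{k}+\varepsilon L\,\boldsymbol{1}_{\lc 0,\tau^{k}_{\delta}\rc}$ stays in $\Acal_{C}$ for every bounded predictable $L$ and every real $\varepsilon$ with $|\varepsilon|$ small, \emph{in both signs}. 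Differentiating the primal objective at $\varepsilon=0$ and using the pointwise identity $y_{k}\widehat{Y}^{k}_{T}=U'_{k}(W^{k}_{T})$ supplied by item~(iv) of Definition~\ref{D:viable}, I obtain
\[
\E\!\left[\widehat{Y}^{k}_{T}\,\bigl(L\,\boldsymbol{1}_{\lc 0,\tau^{k}_{\delta}\rc}\cdot S\bigr)_{T}\right]=0
\]
for every bounded predictable $L$. A standard localization-plus-conditioning argument then promotes this identity to the statement that the process $\widehat{Y}^{k}\bigl(L\,\boldsymbol{1}_{\lc 0,\tau^{k}_{\delta}\rc}\cdot S\bigr)$ is a local $\P$-martingale, which (taking $L\equiv 0$ and $L\equiv 1$) is equivalent to $\widehat{Y}^{k}$ and $\widehat{Y}^{k}S$ both being local martingales on $\lc 0,\tau^{k}_{\delta}\rc$.

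The final step is to paste these agent-by-agent local martingale properties into a single global deflator. A natural candidate is $Y:=Z/S$ with $Z:=\sum_{k}M^{k}$: then $Z$ is a true $\P$-martingale, $Z_{0}=\sum_{k}x_{k}=S_{0}$, and $YS=Z$ is automatically a martingale with $Y_{0}=1$, so the remaining verification is that $Y$ is itself a local martingale---which is exactly where the cover $\lc 0,T\rc=\bigcup_{k}\lc 0,\tau^{k}_{\delta}\rc$ and the per-agent conclusion above come into play. Alternatively, one may splice the individual $\widehat{Y}^{k}$ directly across the intervals of the cover, with successive renormalization so that $Y_{0}=1$, and pass to the limit $\delta\downarrow 0$. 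I expect the main obstacle to be precisely this aggregation: showing that the spliced (or ratio) process inherits the local martingale property---not merely the supermartingale one---across the switching times, and that nonnegativity together with c\`adl\`ag regularity are preserved. Once this is done, the resulting $Y$ satisfies $Y\geq 0$, $Y_{0}=1$, and both $Y$ and $YS$ are local martingales, which places $Y\in\Dcal_{{\rm loc}}$ by the standard characterization.
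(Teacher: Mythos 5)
Your treatment of $\NFLVR_{C}$ is fine and matches the paper (it is exactly Lemma~\ref{L:NFLVRC}~(ii)). For $\NUPBR$, however, the paper's corollary is an immediate citation of Theorem~\ref{T:NUPBR} (existence of some $Y\in\Dcal_{{\rm loc}}$) combined with Theorem~\ref{T:NAchar}~(i); you instead try to re-derive that theorem, and the attempt has genuine gaps. The first is the covering claim: market clearing gives that the \emph{predictable sets} $\{\widehat H^{k}>0\}$ cover $[0,T]\times\Omega$, but it does not give $\lc 0,T\rc=\bigcup_{k}\lc 0,\tau^{k}_{\delta}\rc$. The identity of the agent with $\widehat H^{k}_{t}\ge 1/n$ can switch over time, and $\tau^{k}_{\delta}=0$ for any agent whose initial holding is at most $\delta$; e.g.\ with two agents who hand the whole supply from one to the other at time $T/2$, your union of intervals stops at $T/2$. (Also $\widehat H^{k}$ is merely predictable, so these hitting times are delicate objects.) The correct device, as in the paper, is the predictable partition $D_{1}=\{\widehat H^{1}>0\}$, $D_{k}=\{\widehat H^{k}>0\}\setminus D_{k-1}$, used through indicator \emph{integrands} $\oo_{D_{k}}$, not through stopping at a first exit time.

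The second gap is the first-order-condition step: for the two-sided perturbation $\widehat H^{k}+\varepsilon L\,\oo_{\lc 0,\tau^{k}_{\delta}\rc}$ you must verify admissibility (bounded $L$ does not make $L\cdot S$ bounded below) and justify differentiating under the expectation; and even granting $\E[\widehat Y^{k}_{T}(L\cdot S)_{T}]=0$ for all such $L$, this is a statement at the terminal time only — you cannot upgrade it to ``$\widehat Y^{k}(L\cdot S)$ is a local martingale'' by conditioning, because $\widehat Y^{k}$ is in general a strict local martingale, so $\E[\widehat Y^{k}_{T}\,\cdot\mid\Fcal_{t}]\ne\widehat Y^{k}_{t}\,\cdot$; and taking $L\equiv 0$ yields nothing at all about $\widehat Y^{k}$ itself being a local martingale — that is precisely Lemma~\ref{L:doptlm}, proved via dual optimality, which you never use. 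Finally, the aggregation you flag as ``the main obstacle'' is the heart of the proof and is missing: your additive candidate $Y=Z/S$ with $Z=\sum_{k}\widehat Y^{k}(x_{k}+\widehat H^{k}\cdot S)$ is the wealth-weighted average $\sum_{k}\widehat Y^{k}W^{k}/S$, and there is no reason it is a local martingale once $n\ge 2$. The paper splices multiplicatively: writing $\widehat Y^{k}=\Ecal(-N^{k})$ and $S=M+A$, the true-martingale property of $\widehat Y^{k}(x_{k}+\widehat H^{k}\cdot S)$ from Definition~\ref{D:viable}~(iv) plus integration by parts gives $\1{\widehat H^{k}>0}\cdot(A-[N^{k},M])=0$, and then $N=\sum_{k}\oo_{D_{k}}\cdot N^{k}$, $Y=\Ecal(-N)$ kills the drift globally, so $YS$ is a local martingale and $Y\in\Dcal_{{\rm loc}}$. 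Without this (or an equivalent) argument your proposal does not establish $\Dcal_{{\rm loc}}\ne\emptyset$; the quickest correct proof is simply to cite Lemma~\ref{L:NFLVRC}~(ii) for $\NFLVR_{C}$ and Theorem~\ref{T:NUPBR} together with Theorem~\ref{T:NAchar}~(i) for $\NUPBR$.
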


The proof of Theorem~\ref{T:NUPBR} relies on the following lemma.

\begin{lemma} \label{L:doptlm} Let $U$ be a viable stochastic utility
function. Suppose \eqref{eq:viable} holds, and let $\widehat{Y}$
be the optimal solution to the dual problem~\eqref{eq:dual} for
some $y>0$. Then $\widehat{Y}$ is a strictly positive local martingale.
\end{lemma}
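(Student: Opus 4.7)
The plan is to handle the two conclusions separately. Strict positivity comes from the Inada conditions and the minimum principle for non-negative supermartingales. The local martingale property is the deeper statement and will be obtained by combining the primal-dual relation in viability~(iv) with a first-order condition deduced from the optimality of $\widehat{H}$.

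For strict positivity, first observe that the zero strategy is constrained $1$-admissible, so $\widehat{Y}\in\Dcal_{{\rm sup}}$ implies $\widehat{Y}$ itself is a supermartingale. Viability~(iv) identifies $\widehat{X}_T:=x+(\widehat{H}\cdot S)_T=I(y\widehat{Y}_T)$ with $I=(U')^{-1}$, and the Inada condition $\lim_{x\to\infty}U'(x)=0$ yields $I(0^+)=\infty$. Since $\widehat{X}_T$ is a.s.~finite, we deduce $\widehat{Y}_T>0$ a.s. The standard minimum principle (a non-negative c\`adl\`ag supermartingale which hits zero remains at zero) extends this to $\widehat{Y}_t>0$ for all $t\in[0,T]$ a.s. Moreover, $\widehat{Y}\widehat{X}$ is a non-negative martingale with a.s.~strictly positive terminal value, hence strictly positive pointwise; combined with $\widehat{Y}>0$ this gives $\widehat{X}>0$ on $[0,T]$.

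For the local martingale property, the key is to show that the martingale closure $N_t:=\E[\widehat{Y}_T\mid\Fcal_t]$ coincides with $\widehat{Y}$. Set $\tau_n:=\inf\{t:\widehat{X}_t\ge n\}\wedge T$; continuity of $\widehat{X}$ gives $\tau_n\uparrow T$ almost surely. For stopping times $\sigma\le\tau\le\tau_n$ and bounded $\Fcal_\sigma$-measurable $g$, consider the perturbed strategy $H^\epsilon_t:=\widehat{H}_t(1+\epsilon g\mathbf{1}_{(\sigma,\tau]}(t))$. For $|\epsilon|$ sufficiently small it lies in $\Acal_C$: non-negativity is preserved, and the perturbation in wealth is uniformly bounded by $n|\epsilon|\|g\|_\infty$. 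The terminal wealth is $\widehat{X}_T+\epsilon g(\widehat{X}_\tau-\widehat{X}_\sigma)$, and optimality of $\widehat{H}$ together with differentiation at $\epsilon=0$ produces the first-order condition $\E[U'(\widehat{X}_T)g(\widehat{X}_\tau-\widehat{X}_\sigma)]=0$. Using $U'(\widehat{X}_T)=y\widehat{Y}_T$ from viability~(iv) and arbitrariness of $g$, this is equivalent to $\E[N_\tau\widehat{X}_\tau\mid\Fcal_\sigma]=N_\sigma\widehat{X}_\sigma$ for $\sigma\le\tau\le\tau_n$; hence $N\widehat{X}$ is a local martingale. Since $N\widehat{X}\ge0$, it is in fact a supermartingale, so $\E[N_T\widehat{X}_T]\le N_0x$. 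But $N_T=\widehat{Y}_T$ and $\E[\widehat{Y}_T\widehat{X}_T]=x$ by viability~(iv), forcing $N_0\ge1$; since also $N_0=\E[\widehat{Y}_T]\le1$ by the supermartingale property of $\widehat{Y}$, we get $N_0=1$. A non-negative supermartingale whose terminal expectation equals its initial value is a true martingale, so $\widehat{Y}=N$, and in particular $\widehat{Y}$ is a local martingale.

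The hard part will be rigorously justifying the interchange of derivative and expectation in the first-order condition, since $U'$ can blow up near zero and the additive perturbation $\epsilon g(\widehat{X}_\tau-\widehat{X}_\sigma)$ may drive the wealth close to that singularity. The boundedness of this perturbation on $[0,\tau_n]$ supplies some control, and concavity of $U$ provides upper dominating bounds; however obtaining an integrable lower dominant typically requires the integrability estimates from the utility-maximization duality theory alluded to by the authors.
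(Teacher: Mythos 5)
Your strict positivity argument is fine and matches the paper's (Inada plus $\widehat{X}_T=I(y\widehat{Y}_T)$ forces $\widehat{Y}_T>0$, then the minimum principle for nonnegative supermartingales). The local martingale part, however, has a fatal gap, not merely the technical one you flag at the end. Your two-sided first-order condition $\E[U'(\widehat{X}_T)\,g\,(\widehat{X}_\tau-\widehat{X}_\sigma)]=0$ cannot be justified, and in fact is false in general: the paper's own Example~2 (Section~\ref{S:ex2}) is a constrained representative agent equilibrium with $U=\log$, $\widehat{H}\equiv 1$, $\widehat{X}=S=1/Z$, whose dual optimizer is $\widehat{Y}=Z$, a \emph{strict} local martingale with $\E[Z_T]<1$. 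Your chain of deductions (FOC $\Rightarrow$ $N\widehat{X}$ is a nonnegative local martingale, hence a supermartingale, $\Rightarrow$ $N_0\ge 1$ $\Rightarrow$ $\E[\widehat{Y}_T]=\widehat{Y}_0$ $\Rightarrow$ $\widehat{Y}=N$) would show that the dual optimizer is a \emph{true} martingale, which is strictly stronger than the lemma and contradicts that example. The reason the FOC breaks is structural: since $U=-\infty$ on $(-\infty,0]$ and $U'$ blows up at $0$, the map $\epsilon\mapsto\E[U(\widehat{X}_T+\epsilon g(\widehat{X}_\tau-\widehat{X}_\sigma))]$ can equal $-\infty$, or have an infinite one-sided derivative, for every $\epsilon$ of one sign (the additive perturbation is bounded on $\lc 0,\tau_n\rc$, but $\widehat{X}_T$ can be arbitrarily close to $0$ on events where $\widehat{X}_\tau-\widehat{X}_\sigma$ is of order one). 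Optimality of $\epsilon=0$ then yields only a one-sided inequality of supermartingale type, never the equality you need, so no amount of duality-theory integrability estimates can close the gap---the statement you are trying to interchange limits for is not true.

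The paper's proof works entirely on the dual side and never differentiates the primal objective. It writes $\widehat{Y}=\Ecal(-N-B)$ with $N$ a local martingale and $B$ nondecreasing predictable, and $S=S_0\Ecal(M+A)$; testing the supermartingale deflator property of $\widehat{Y}$ against the constrained wealth processes $\Ecal(\theta\cdot(M+A))$ for arbitrary nonnegative $\theta$ (via Yor's formula) shows $A-[N,M]$ is non-increasing, which in turn implies that the competitor $Y'=\Ecal(-N)$ also lies in $\Dcal_{{\rm sup}}$. Since $\widehat{Y}_T\le Y'_T$ with strict inequality on $\{B_T>0\}$ and $V$ is decreasing, dual optimality forces $B_T=0$, i.e.\ $\widehat{Y}=\Ecal(-N)$ is a local martingale. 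That mechanism---building a strictly better dual competitor by stripping off the finite-variation part---is what delivers exactly the local martingale property and nothing more, which is consistent with Example~2; any argument that instead closes $\widehat{Y}$ into the martingale $\E[\widehat{Y}_T\mid\Fcal_t]$ is aiming at a false target.
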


\begin{proof} The Inada conditions imply $I(0)=\infty$, which forces
$\widehat{Y}_{T}>0$ due to {\ref{D:viable:4}} in the definition
of viability. Since $\widehat{Y}$ is a nonnegative supermartingale
this yields~$\widehat{Y}>0$, proving the first assertion. As a consequence
(and since $\widehat{Y}_{0}=1$), there is a local martingale $N$
and a nondecreasing predictable process~$B$ such that $\widehat{Y}=\Ecal(-N-B)$.
We need to prove that $B=0$. To this end, observe that since $S>0$
there is a continuous semimartingale $M+A$, with $M$ a local martingale
and $A$ a predictable finite variation process, such that $S=S_{0}\Ecal(M+A)$.
Furthermore, for any nonnegative predictable $(M+A)$-integrable process
$\theta$, $X=\Ecal(\theta\cdot(M+A))$ is the wealth process of the
constrained $1$-admissible strategy $H=S^{-1}\theta X$. Indeed,
we have $H\ge0$, $X>0$, and 
\[
X=1+X\cdot(\theta\cdot(M+A))=1+(X\theta S^{-1})\cdot(S\cdot(M+A))=1+H\cdot S.
\]
Since $\widehat{Y}\in\Dcal_{{\rm sup}}$, therefore, $\widehat{Y}X$
is a supermartingale. Moreover, Yor's formula and the fact that $[M,B]=[A,B]=[A,N]=0$
due to the continuity of $M$ and $A$, yield 
\[
\widehat{Y}X=\Ecal(-N-B)\Ecal(\theta\cdot(M+A))=\Ecal(-N+\theta\cdot M+\theta\cdot(A-[N,M])-B).
\]
It follows that $\theta\cdot(A-[N,M])-B$ is a non-increasing process
for any $\theta\ge0$, and thus $A-[N,M]$ must also be non-increasing.
We claim that this implies that $Y'=\Ecal(-N)$ lies in $\Dcal_{{\rm sup}}$.
To see why, note that for any constrained $1$-admissible $H$ we
have, with $X=1+H\cdot S$, 
\begin{align*}
Y'X & =1+Y'_{-}\cdot X+X\cdot Y'+[Y',X]\\
 & =1+(Y'_{-}HS)\cdot(M+A)+X\cdot Y'-(Y'_{-}HS)\cdot[N,M]\\
 & =1+(Y'_{-}HS)\cdot M+X\cdot Y'+(Y'_{-}HS)\cdot(A-[N,M]).
\end{align*}
Only the last term is not a local martingale, and it is non-increasing
since $Y'_{-}HS\ge0$ and $A-[N,M]$ is non-increasing. So $Y'\in\Dcal_{{\rm sup}}$
as claimed. But we also have 
\[
\widehat{Y}_{T}=\Ecal(-N-B)_{T}\le\Ecal(-N)_{T}=Y'_{T},
\]
with strict inequality on the set $\{B_{T}>0\}$. If this set had
positive probability, $Y'$ would achieve a strictly smaller objective
value than $\widehat{Y}$ in the dual problem~\eqref{eq:dual}, contradicting
the optimality of $\widehat{Y}$. Hence $B_{T}=0$, and the lemma
is proved. \end{proof}

\begin{proof}[Proof of Theorem~\ref{T:NUPBR}] Let $\widehat{Y}^{k}$
be the dual optimizer of the $k$:th investor's problem. By Lemma~\ref{L:doptlm}
it is a strictly positive local martingale, and condition {\ref{D:viable:4}}
in the definition of viability implies that $\widehat{Y}^{k}(x_{k}+\widehat{H}^{k}\cdot S)$
is a martingale. In particular we may write $\widehat{Y}^{k}=\Ecal(-N^{k})$
for some local martingale $N^{k}$. Letting $S=M+A$ be the canonical
decomposition of $S$, integration by parts yields 
\begin{align*}
\widehat{Y}^{k}(x_{k}+\widehat{H}^{k}\cdot S) & =x_{k}+(x_{k}+\widehat{H}^{k}\cdot S)\cdot\widehat{Y}_{-}^{k}+(\widehat{Y}^{k}\widehat{H}^{k})\cdot(M+A)+\widehat{H}^{k}\cdot[\widehat{Y}^{k},M]\\
 & =x_{k}+(x_{k}+\widehat{H}^{k}\cdot S)\cdot\widehat{Y}_{-}^{k}+(\widehat{Y}_{-}^{k}\widehat{H}^{k})\cdot M+(\widehat{H}^{k}\widehat{Y}_{-}^{k})\cdot(A-[N^{k},M]).
\end{align*}
Since the left side is a $\P$-martingale, and since $\widehat{Y}^{k}>0$,
we deduce $\widehat{H}^{k}\cdot(A-[N^{k},M])=0$, and hence 
\begin{equation}
\1{\widehat{H}^{k}>0}\cdot(A-[N^{k},M])=0.\label{eq:kpos}
\end{equation}
We now define predictable sets $D_{k}$ by 
\[
D_{1}=\{\widehat{H}^{1}>0\},\qquad D_{k}=\{\widehat{H}^{k}>0\}\setminus D_{k-1},\quad k=2,\ldots,n.
\]
Then clearly $D_{k}\cap D_{\ell}=\emptyset$ whenever $k\ne\ell$.
Moreover, market clearing implies that for every $(t,\omega)$, there
is at least one $k\in\{1,\ldots,n\}$ such that $\widehat{H}_{t}^{k}(\omega)>0$.
Therefore $\cup_{k}D_{k}=\cup_{k}\{\widehat{H}^{k}>0\}=[0,T]\times\Omega$,
and hence 
\[
\oo_{D_{1}}+\cdots+\oo_{D_{n}}=1.
\]
Finally, since $D_{k}\subset\{\widehat{H}^{k}>0\}$, it follows from~\eqref{eq:kpos}
that $\oo_{D_{k}}\cdot(A-[N^{k},M])=0$ for each~$k$. Thus with
\[
N=\oo_{D_{1}}\cdot N^{1}+\cdots\oo_{D_{n}}\cdot N^{n},
\]
we get 
\[
A-[N,M]=\sum_{k=1}^{n}\oo_{D_{k}}\cdot(A-[N^{k},M])=0.
\]
Setting $Y=\Ecal(-N)$ this yields $YS=S\cdot Y+Y\cdot M$, which
is a local martingale. Thus $Y\in\Dcal_{{\rm loc}}$, and the theorem
is proved. \end{proof}

\section{Representative Agent Equilibria}

\label{S:repr}

In this section we consider financial markets that are not only informationally
efficient in the sense of Definition~\ref{D:efficient}, but have
the additional feature that a representative agent can be constructed.
Such markets turn out to have a simple dual characterization in terms
of supermartingale measures and local martingale deflators, see Theorem~\ref{T:reprchar}
below. We also consider the question of aggregation of a multi-agent
equilibrium to a representative agent equilibrium. This issue has
a long history in the literature, see for instance Magill\cite{Magill1981},
Constantinides \cite{Constantinides1982}, Karatzas, Lehoczky, and
Shreve \cite{Karatzas/Lehoczky/Shreve1990}, Cuoco and He \cite{CuocoHe2001}.

\begin{definition} A {\em constrained representative agent equilibrium}
$(S;U)$ is a constrained equilibrium with one single agent. That
is, the constrained agent characteristics is $(U,1,1)$. \end{definition}

By market clearing, a representative agent in any equilibrium (constrained
or not) must use the constant strategy $\widehat{H}\equiv1$. This
strategy automatically lies in $\Acal_{C}$, clearly satisfying the
short sale constraint. However, there exist constrained representative
agent equilibria, where the constant strategy is optimal in the class
$\Acal_{C}$, but not in the class $\Acal$ of unconstrained admissible
strategies. An example is given in Section~\ref{S:ex2}.

The following is the main result of this section.

\begin{theorem} \label{T:reprchar} Consider a financial market $S$.
The following conditions are equivalent: 
\begin{enumerate}
\item \label{T:reprchar:1} $(S,U)$ is a constrained representative agent
equilibrium for some viable stochastic utility function $U$, 
\item \label{T:reprchar:2} $\Mcal_{{\rm sup}}\ne\emptyset$ and there exists
a process $Z\in\Dcal_{{\rm loc}}$ such that $ZS$ is a martingale. 
\end{enumerate}
In either case, $S$ satisfies $\NFLVR_{C}$, $\ND_{C}$, and $\NUPBR$.
\end{theorem}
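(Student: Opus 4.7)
The plan is to establish the two implications separately and then read off the concluding claims from ingredients that are already in hand. For \ref{T:reprchar:1}$\Rightarrow$\ref{T:reprchar:2}, the key observation is that in a representative agent equilibrium market clearing forces the optimal strategy $\widehat{H}\equiv1$, which is strictly positive everywhere, so the short sale constraint is nowhere locally binding and the construction from the proof of Theorem~\ref{T:NUPBR} collapses: the predictable set $D_{1}=\{\widehat{H}>0\}$ covers all of $[0,T]\times\Omega$, and the deflator produced there coincides with the agent's dual optimizer $\widehat{Y}$. Concretely, Lemma~\ref{L:NFLVRC}\ref{L:NFLVRC:2} gives $\NFLVR_{C}$ and hence $\Mcal_{{\rm sup}}\ne\emptyset$ via Theorem~\ref{T:NAchar}\ref{T:NAchar:2}; viability condition \ref{D:viable:4} together with market clearing (which forces $S_{0}=1$ and $1+(\widehat{H}\cdot S)_{T}=S_{T}$) shows that $\widehat{Y}S$ is a true $\P$-martingale; and rerunning the computation in the proof of Theorem~\ref{T:NUPBR} in this single-agent case yields $A-[N,M]=0$ with $\widehat{Y}=\Ecal(-N)$, upgrading $\widehat{Y}$ from $\Dcal_{{\rm sup}}$ to $\Dcal_{{\rm loc}}$. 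Taking $Z=\widehat{Y}$ then closes this direction.

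For \ref{T:reprchar:2}$\Rightarrow$\ref{T:reprchar:1} I would reverse-engineer a stochastic utility from the first order condition $U'(\widehat{X}_{T})=y\widehat{Y}_{T}$ with $\widehat{X}_{T}=S_{T}$ and the ansatz $\widehat{Y}_{T}=Z_{T}$, which forces a log-type candidate,
\[
U(\omega,x)=Z_{T}(\omega)S_{T}(\omega)\log x.
\]
This is strictly increasing and strictly concave with the Inada conditions, and $\E[|U(x)|]=|\log x|\,\E[Z_{T}S_{T}]=|\log x|\,S_{0}<\infty$ since $ZS$ is a $\P$-martingale starting at $S_{0}=1$. The optimality of $\widehat{H}\equiv1$ then reduces to a tangent-line inequality: for any constrained admissible $K$ with terminal wealth $X_{T}^{K}=1+(K\cdot S)_{T}\ge0$,
\[
U(X_{T}^{K})\le U(S_{T})+U'(S_{T})(X_{T}^{K}-S_{T})=U(S_{T})+Z_{T}(X_{T}^{K}-S_{T}),
\]
and taking expectations, combined with the supermartingale property of $ZX^{K}$ (from $Z\in\Dcal_{{\rm loc}}$ and $X^{K}\ge0$) and the martingale property of $ZS$, gives $\E[U(X_{T}^{K})]\le\E[U(S_{T})]$; wealth paths that become negative can be discarded, since then $\E[U(X_{T}^{K})]=-\infty$ automatically.

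The step I expect to be the main obstacle is verifying the remaining viability requirements of Definition~\ref{D:viable}, in particular finiteness of the primal and dual value functions at the relevant points and uniqueness of the optimizers. The primal value $u(1)=\E[Z_{T}S_{T}\log S_{T}]$ is not automatically finite from the hypotheses alone, and I expect closing this gap to require either an auxiliary integrability assumption, a truncation/regularization of $U$ at both ends to tame the tails, or a direct appeal to general existence theorems for stochastic log-type utility maximization such as Karatzas--Zitkovic~\cite{Karatzas/Zitkovic:2003} or Mostovyi~\cite{Mostovyi2012}. Once viability is secured, the explicit identity $I(yZ_{T})=S_{T}/y$ with $I=(U')^{-1}$ together with uniqueness of the dual optimizer forces $\widehat{Y}=Z$ at $y=u'(1)=1$, which is precisely condition \ref{D:viable:4} and thus certifies that $(S,U)$ forms a constrained representative agent equilibrium.

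For the concluding claims, $\NFLVR_{C}$ is immediate from Lemma~\ref{L:NFLVRC}\ref{L:NFLVRC:2} applied to the equilibrium at hand, $\NUPBR$ follows from $Z\in\Dcal_{{\rm loc}}$ via Theorem~\ref{T:NAchar}\ref{T:NAchar:1}, and $\ND_{C}$ is simply the statement that the constant strategy $\equiv1$ is $C$-maximal, which is the content of Lemma~\ref{L:NFLVRC}\ref{L:NFLVRC:1} applied to the representative agent's optimizer $\widehat{H}\equiv1$.
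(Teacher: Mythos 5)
Your direction \ref{T:reprchar:1}$\Rightarrow$\ref{T:reprchar:2} is essentially the paper's own argument (the dual optimizer $\widehat{Y}$ is a strictly positive local martingale by Lemma~\ref{L:doptlm}, $\widehat{Y}S$ is a true martingale by viability property~\ref{D:viable:4} plus market clearing with $S_0=1$, and membership in $\Dcal_{{\rm loc}}$ follows), and your treatment of the concluding claims $\NFLVR_C$, $\NUPBR$, $\ND_C$ matches the paper. The genuine problems are in \ref{T:reprchar:2}$\Rightarrow$\ref{T:reprchar:1}. First, your optimality argument asserts that $ZX^{K}$ is a supermartingale ``from $Z\in\Dcal_{{\rm loc}}$ and $X^{K}\ge0$''. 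But $\Dcal_{{\rm loc}}$ only controls $1$-admissible strategies, whereas the primal problem ranges over all constrained admissible strategies, i.e.\ $a$-admissible for arbitrary $a$; for such a $K$ one only obtains that $ZX^{K}$ is a local martingale which is not bounded below, and neither the supermartingale property nor $\E[Z_{T}X_{T}^{K}]\le1$ follows from nonnegativity of the \emph{terminal} value alone. This is precisely where the paper uses the hypothesis $\Mcal_{{\rm sup}}\ne\emptyset$, which you never invoke in this direction: choosing $\Q\in\Mcal_{{\rm sup}}$ and applying the supermartingale version of Ansel--Stricker (\cite[Proposition~3.5]{Pulido2014}) gives $1+(K\cdot S)_{t}\ge\E^{\Q}[1+(K\cdot S)_{T}\mid\Fcal_{t}]>0$, so the wealth process is positive, hence $1$-admissible, and only then is $ZX^{K}$ a nonnegative local martingale and thus a supermartingale.

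Second, your candidate utility $U(x)=Z_{T}S_{T}\log x$ leaves the viability and finiteness issues open, as you acknowledge, and this is not a removable technicality: Definition~\ref{D:equil} requires the optimal value to be finite, and $\E[Z_{T}S_{T}\log S_{T}]$ (even its negative part) is not controlled by the hypotheses, so the construction can simply fail; an ``auxiliary integrability assumption'' would weaken the theorem, and no truncation argument is carried out. The paper's choice is engineered exactly to avoid this: it takes $U(x)=Z_{T}S_{T}^{\gamma}x^{1-\gamma}/(1-\gamma)$ with $\gamma\in(0,1)$, so that the first-order condition $U'(S_{T})=Z_{T}$ still holds, the optimal value is $\E[Z_{T}S_{T}]/(1-\gamma)=S_{0}/(1-\gamma)<\infty$ for free, $\E[Z_{T}S_{T}^{\gamma}]\le\E[Z_{T}(1+S_{T})]<\infty$, and viability is obtained by writing $U=\frac{\d\P^{*}}{\d\P}U^{*}$ with $U^{*}$ a scaled power utility, invoking \cite[Theorem~3.10]{Karatzas/Zitkovic:2003} for $U^{*}$ and the belief-invariance Lemma~\ref{L:viableinv}. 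Without this device (or a fully worked-out substitute), your \ref{T:reprchar:2}$\Rightarrow$\ref{T:reprchar:1} is incomplete.
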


An immediate consequence of this theorem and the definition of constrained
informational efficiency is the following.

\begin{corollary} If $\Mcal_{{\rm sup}}\ne\emptyset$ and there exists
a process $Z\in\Dcal_{{\rm loc}}$ such that $ZS$ is a martingale,
then the market is constrained informationally efficient. \end{corollary}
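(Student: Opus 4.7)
The plan is essentially trivial: this corollary is an immediate consequence of the implication \ref{T:reprchar:2} $\Rightarrow$ \ref{T:reprchar:1} in Theorem~\ref{T:reprchar}, combined with Definition~\ref{D:efficient}. There is no separate argument to construct; the content is purely that constrained representative agent equilibria are a special case of constrained equilibria, so whatever is supported by the former is supported by the latter.

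In detail, I would proceed as follows. First, observe that the hypotheses $\Mcal_{{\rm sup}} \ne \emptyset$ and the existence of $Z \in \Dcal_{{\rm loc}}$ such that $ZS$ is a martingale are precisely condition~\ref{T:reprchar:2} of Theorem~\ref{T:reprchar}. Applying that theorem (specifically the direction \ref{T:reprchar:2} $\Rightarrow$ \ref{T:reprchar:1}), we obtain a viable stochastic utility function $U$ such that $(S,U)$ is a constrained representative agent equilibrium.

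Next, by the definition preceding Theorem~\ref{T:reprchar}, a constrained representative agent equilibrium is just a constrained equilibrium with $n=1$ and agent characteristics $(U,1,1)$, that is $U_1 = U$, $x_1 = 1$, and $\widehat{H}^1 \equiv 1$. This is in particular a valid instance of constrained agent characteristics $(U_k, x_k, \widehat{H}^k : k = 1, \ldots, n)$ as required in Definition~\ref{D:efficient}. Hence $S$ is supported by a constrained equilibrium, which is exactly the statement that $S$ is constrained informationally efficient.

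There is no significant obstacle: the only thing worth flagging is the book-keeping issue that the representative agent definition fixes $x_1 = 1$, which implicitly normalizes $S_0 = 1$ through the market-clearing condition $x_1 + \cdots + x_n = S_0$; but Theorem~\ref{T:reprchar} is already phrased so as to accommodate this, and so no further adjustment is needed in the proof of the corollary.
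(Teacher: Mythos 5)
Your proof is correct and matches the paper's intent exactly: the paper offers no separate argument, presenting the corollary as an immediate consequence of the implication {\ref{T:reprchar:2}} $\Longrightarrow$ {\ref{T:reprchar:1}} of Theorem~\ref{T:reprchar} together with Definition~\ref{D:efficient}, which is precisely what you do. Your remark on the normalization $x_1=S_0$ implicit in the representative agent characteristics $(U,1,1)$ is a reasonable side observation and does not affect the argument.
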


The following lemma is used in the proof of Theorem~\ref{T:reprchar}.
It essentially shows that viability is invariant under changes of
beliefs. See Källblad \cite[Theorem~1]{Kallblad2013} for a related
result.

\begin{lemma}\label{L:viableinv} Let $U^{*}$ be a viable stochastic
utility function, and $\P^{*}\sim\P$ an equivalent probability measure.
Set $Z_{t}=\frac{\d\P^{*}}{\d\P}|_{\Fcal_{t}}$. Then the stochastic
utility function $U=Z_{T}U^{*}$ is again viable. \end{lemma}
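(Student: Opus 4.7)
The plan is to reduce viability of $U=Z_T U^*$ under $\P$ to the assumed viability of $U^*$ under $\P^*$ by translating every ingredient---utility, conjugate, primal constraint set, dual constraint set---through the change of measure. Write $u^*,v^*$ for the primal and dual value functions associated with $U^*$ under $\P^*$, and $\Dcal_{\rm sup}^*$ for the analogue of $\Dcal_{\rm sup}$ under $\P^*$. First I would note that $U$ inherits the basic regularity of a stochastic utility function from $U^*$: since $Z_T>0$ is constant in $x$ fibrewise, monotonicity, strict concavity, continuous differentiability, and the Inada conditions are preserved, and $\E[|U(x)|]=\E^*[|U^*(x)|]<\infty$.

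The two key identities to establish are $u=u^*$ and $v=v^*$. The primal identity is immediate from
\[
\E[U(x+(H\cdot S)_T)]=\E[Z_T U^*(x+(H\cdot S)_T)]=\E^*[U^*(x+(H\cdot S)_T)],
\]
so the supremum over $\Acal_C$ (which does not depend on the measure) coincides. For the dual, the fibrewise conjugate satisfies $V(y,\omega)=Z_T(\omega)V^*(y/Z_T(\omega),\omega)$ because rescaling $y$ by $Z_T$ factors out of the sup. I would then use the bijection $Y\leftrightarrow\widetilde Y=Y/Z$ between $\Dcal_{\rm sup}$ and $\Dcal_{\rm sup}^*$: the standard Bayes identity that a nonnegative process $X$ is a $\P^*$-supermartingale iff $ZX$ is a $\P$-supermartingale, combined with the normalization $Z_0=1$, makes this a genuine bijection. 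Then $\E[V(yY_T)]=\E[Z_TV^*(y\widetilde Y_T)]=\E^*[V^*(y\widetilde Y_T)]$, and taking the supremum over $Y$ (equivalently, over $\widetilde Y$) yields $v(y)=v^*(y)$.

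With $u=u^*$ and $v=v^*$, the viability hypotheses \eqref{eq:viable} for $U$ are equivalent to those for $U^*$ (the set $\Mcal_{\rm sup}$ depends only on the null sets, which coincide for $\P$ and $\P^*$), so viability of $U^*$ applies. Conclusions \ref{D:viable:1} and \ref{D:viable:2} then transfer at once. For \ref{D:viable:3}, the unique dual optimizer $\widehat Y^*$ for $U^*$ produces the unique dual optimizer $\widehat Y=Z\widehat Y^*$ for $U$ via the bijection above. For \ref{D:viable:4}, since $U'(x)=Z_T U^{*\prime}(x)$ one has $I(z)=I^*(z/Z_T)$, so
\[
I(y\widehat Y_T)=I^*(yZ_T\widehat Y^*_T/Z_T)=I^*(y\widehat Y^*_T)=x+(\widehat H^*\cdot S)_T,
\]
which identifies the primal optimizer for $U$ as $\widehat H=\widehat H^*$; and the $\P$-martingale property of $\widehat Y(x+\widehat H\cdot S)=Z\widehat Y^*(x+\widehat H^*\cdot S)$ is the Bayes translation of the corresponding $\P^*$-martingale property guaranteed by viability of $U^*$.

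No substantive obstacle is expected; the only care needed is in the $\P\leftrightarrow\P^*$ dictionary, in particular verifying that $Y\mapsto Y/Z$ maps $\Dcal_{\rm sup}$ \emph{onto} $\Dcal_{\rm sup}^*$ (not merely into it), which follows from the reverse Bayes identity together with the normalization $Z_0=Y_0=1$.
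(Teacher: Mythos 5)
Your proposal is correct and follows essentially the same route as the paper: translating the primal and dual problems to $\P^*$ via $\E[Z_T\,\cdot\,]=\E^*[\,\cdot\,]$, the conjugate identity $V(y)=Z_TV^*(y/Z_T)$, and the bijection $Y\mapsto Y/Z$ between $\Dcal_{\rm sup}$ and its $\P^*$-analogue, then transferring properties (i)--(iv) back, with (iv) handled through $I(y\widehat Y_T)=I^*(y\widehat Y^*_T)$ and the Bayes equivalence of the martingale properties. The surjectivity point you flag at the end is exactly the observation the paper makes when identifying $\Dcal_{\rm sup}^*$ as the set of $\P^*$-supermartingale deflators.
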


\begin{proof} Since the class $\Acal_{C}$ is invariant under equivalent
changes of probability measure, we have 
\begin{equation}
u(x)=\sup_{H\in\Acal_{C}}\E[U(x+(H\cdot S)_{T})]=\sup_{H\in\Acal_{C}}\E^{*}[U^{*}(x+(H\cdot S)_{T})],\label{eq:ustar}
\end{equation}
where $\E^{*}[\,\cdot\,]$ denotes expectation under $\P^{*}$. Furthermore,
we have 
\[
V(y)=\sup_{x>0}\,\left(U(x)-xy\right)=Z_{T}\sup_{x>0}\,\left(U^{*}(x)-x\frac{y}{Z_{T}}\right)=Z_{T}V^{*}\left(\frac{y}{Z_{T}}\right),
\]
where $V^{*}$ is the conjugate of $U^{*}$. Therefore, 
\begin{equation}
v(y)=\inf_{Y\in\Dcal_{{\rm sup}}}\E\left[Z_{T}V^{*}\left(\frac{yY_{T}}{Z_{T}}\right)\right]=\inf_{Y^{*}\in\Dcal_{{\rm sup}}^{*}}\E^{*}\left[V^{*}\left(yY_{T}^{*}\right)\right],\label{eq:vstar}
\end{equation}
where $\Dcal_{{\rm sup}}^{*}=\{Y/Z:Y\in\Dcal_{{\rm sup}}\}$. Note
that this is the set of all càdlàg adapted $Y^{*}$ with $Y^{*}\ge0$
and $Y_{0}^{*}=1$ such that $Y^{*}(1+H\cdot S)$ is a $\P^{*}$-supermartingale
for every constrained 1-admissible~$H$. Suppose now~\eqref{eq:viable}
holds, and note that this condition is invariant under equivalent
changes of measure. In view of~\eqref{eq:ustar} and~\eqref{eq:vstar},
the viability of $U^{*}$ implies that $u$ and $v$ are conjugate
and continuously differentiable. Moreover, the primal problem has
an optimal solution~$\widehat{H}\in\Acal_{C}$, and the dual problem
an optimal solution $\widehat{Y}^{*}\in\Dcal_{{\rm sup}}^{*}$. Both
are unique in the appropriate sense, see Definition~\ref{D:viable}{\ref{D:viable:2}}--{\ref{D:viable:3}}.
The process $\widehat{Y}=Z\widehat{Y}^{*}$ then lies in $\Dcal_{{\rm sup}}$
and is optimal for the dual problem under~$\P$. At this point we
have established properties {\ref{D:viable:1}}--{\ref{D:viable:3}}
in the definition of viability. Property~{\ref{D:viable:4}} follows
from the equality 
\[
x+(\widehat{H}\cdot S)_{T}=I^{*}(yY_{T}^{*})=I(yZ_{T}Y_{T}^{*})=I(yY_{T})
\]
and the fact that $\widehat{Y}(x+\widehat{H}\cdot S)$ is a $\P$-martingale
if and only if $\widehat{Y}^{*}(x+\widehat{H}\cdot S)$ is a $\P^{*}$-martingale.
The lemma is proved. \end{proof}

\begin{proof}[Proof of Theorem~\ref{T:reprchar}] {\ref{T:reprchar:1}}
$\Longrightarrow$ {\ref{T:reprchar:2}}: Lemma~\ref{L:NFLVRC}
implies that $\NFLVR_{C}$ holds, so that $\Mcal_{{\rm sup}}\ne\emptyset$
by Theorem~\ref{T:NAchar}. Market clearing together with viability
of $U$ show that the dual optimizer $Z=\widehat{Y}$ has the property
that $ZS$ is a martingale. Moreover, $Z$ is a local martingale by
Lemma~\ref{L:doptlm}. This implies that $Y\in\Dcal_{{\rm loc}}$.
Indeed, let $H\in\Acal_{C}$ be 1-admissible. Integration by parts
applied twice (and associativity of the stochastic integral) yields
\begin{align*}
Z(1+H\cdot S) & =1+(1+H\cdot S)\cdot Z+Z_{-}\cdot(H\cdot S)+H\cdot[Z,S]\\
 & =1+(1+H\cdot S-HS)\cdot Z+H\cdot(ZS).
\end{align*}
Thus the left side is a stochastic integral with respect to the (two-dimensional)
local martingale $(Z,ZS)$. Since it is nonnegative it is again a
local martingale by the Ansel-Stricker theorem.

{\ref{T:reprchar:2}} $\Longrightarrow$ {\ref{T:reprchar:1}}:
The proof is inspired by \cite[Theorem~3.2]{Jarrow:2012fk}. As a
candidate utility function for the representative agent we take 
\[
U(x)=\frac{Z_{T}S_{T}^{\gamma}x^{1-\gamma}}{1-\gamma}
\]
for some $\gamma\in(0,1)$. Let us show that $\widehat{H}\equiv1$
is optimal. Clearly $\widehat{H}\in\Acal_{C}$. Next, the corresponding
utility is $\E[U(S_{T})]=\E[Z_{T}S_{T}]/(1-\gamma)=1/(1-\gamma)$,
in particular it is finite. Now let $H\in\Acal_{C}$ be arbitrary.
We may assume that the final wealth $1+(H\cdot S)_{T}$ is positive
almost surely, otherwise the utility would be $-\infty$. This implies
that the wealth process is in fact 1-admissible. Indeed, $H\cdot S$
is a supermartingale under any $\Q\in\Mcal_{{\rm sup}}$, which exists
by assumption, so 
\begin{equation}
1+(H\cdot S)_{t}\ge\E^{\Q}[1+(H\cdot S)_{T}\mid\Fcal_{t}]>0.\label{eq:wealthpos}
\end{equation}
Here we used \cite[Proposition~3.5]{Pulido2014}, the ``Ansel-Stricker
theorem for supermartingales''. It is now easy to see that $\widehat{H}$
is optimal. Indeed, concavity of $U$, the equality $U'(S_{T})=Z_{T}$,
and the properties of $Z$ yield 
\begin{align*}
\E[U(1+(H\cdot S)_{T})] & \le\E[U(S_{T})]+\E[U'(S_{T})(1+(H\cdot S)_{T})]-\E[U'(S_{T})S_{T}]\\
 & =\E[U(S_{T})]+\E[Z_{T}(1+(H\cdot S)_{T})]-\E[Z_{T}S_{T}]\\
 & \le\E[U(S_{T})].
\end{align*}
We need to prove that $U$ is viable. To this end, write $U(x)=\frac{\d\P^{*}}{\d\P}U^{*}(x)$,
where 
\[
\frac{\d\P^{*}}{\d\P}=\frac{Z_{T}S_{T}^{\gamma}}{\E[Z_{T}S_{T}^{\gamma}]},\qquad U^{*}(x)=\E[Z_{T}S_{T}^{\gamma}]\frac{x^{1-\gamma}}{1-\gamma}.
\]
Note that $\E[Z_{T}S_{T}^{\gamma}]\le\E[Z_{T}(1+S_{T})]<\infty$.
The utility function $U^{*}$ is viable by \cite[Theorem~3.10]{Karatzas/Zitkovic:2003},
so Lemma~\ref{L:viableinv} implies that $U$ is also viable.

It now only remains to prove that one (and hence both) of {\ref{T:reprchar:1}}
and {\ref{T:reprchar:2}} implies $\NFLVR_{C}$, $\ND_{C}$, and
$\NUPBR_{C}$. In view of what we have already done, only $\ND_{C}$
needs to be verified. But this follows directly from Lemma~\ref{L:NFLVRC}{\ref{L:NFLVRC:1}}.
\end{proof}

In the complete market setting one expects that a representative agent
equilibrium can be found. The following result confirms this intuition.

\begin{proposition} \label{P:complete} Let $(S;\ U_{k},x_{k},\widehat{H}^{k}:k=1,\ldots,n)$
be a constrained equilibrium. If $\Dcal_{{\rm loc}}$ is a singleton,
and $\widehat{H}^{k}>0$ for all $k$, then the two equivalent conditions
of Theorem~\ref{T:reprchar} hold. \end{proposition}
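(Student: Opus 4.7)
The plan is to show that the dual optimizer of \emph{each} agent's problem already lies in $\Dcal_{\mathrm{loc}}$, use the singleton assumption to identify them all, and then sum up to produce the desired process $Z$ with $ZS$ a martingale. The conclusion $\Mcal_{\mathrm{sup}}\neq\emptyset$ is essentially free: since each agent's problem admits an optimizer with finite value, Lemma~\ref{L:NFLVRC}\ref{L:NFLVRC:2} gives $\NFLVR_C$, and Theorem~\ref{T:NAchar}\ref{T:NAchar:2} then supplies an equivalent supermartingale measure.

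For the deflator, I would rerun the first portion of the proof of Theorem~\ref{T:NUPBR}. For each $k$, Lemma~\ref{L:doptlm} gives $\widehat{Y}^k=\Ecal(-N^k)$ with $N^k$ a local martingale, and the integration-by-parts argument produces the key identity
\[
\1{\widehat{H}^k>0}\cdot(A-[N^k,M])=0,
\]
which is exactly equation~\eqref{eq:kpos}. The crucial new input is the hypothesis $\widehat{H}^k>0$ everywhere, which upgrades this identity to $A-[N^k,M]=0$ identically on $[0,T]\times\Omega$ for every individual~$k$ (not just after stitching agents together on disjoint predictable sets, as was done in the proof of Theorem~\ref{T:NUPBR}).

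With $A=[N^k,M]$ in hand, I would then mimic the integration-by-parts computation from the proof of the direction \ref{T:reprchar:2}$\Rightarrow$\ref{T:reprchar:1} of Theorem~\ref{T:reprchar}: first check that $\widehat{Y}^k S$ is a local martingale, and then that
\[
\widehat{Y}^k(1+H\cdot S)=1+(1+H\cdot S-HS)\cdot\widehat{Y}^k+H\cdot(\widehat{Y}^k S)
\]
is a stochastic integral against the two-dimensional local martingale $(\widehat{Y}^k,\widehat{Y}^k S)$; nonnegativity plus Ansel--Stricker gives $\widehat{Y}^k\in\Dcal_{\mathrm{loc}}$. Since $\Dcal_{\mathrm{loc}}$ is a singleton, all $\widehat{Y}^k$ are equal to a common process $Z$.

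To finish, I would invoke Definition~\ref{D:viable}\ref{D:viable:4}, which says that $Z(x_k+\widehat{H}^k\cdot S)$ is a $\P$-martingale for each $k$. Summing over $k$ and using market clearing $\sum_k x_k=S_0$ and $\sum_k\widehat{H}^k\equiv1$, the telescoping
\[
\sum_{k=1}^n Z_t(x_k+(\widehat{H}^k\cdot S)_t)=Z_t\Big(S_0+(\textstyle\sum_k\widehat{H}^k)\cdot S\Big)_t=Z_t S_t
\]
exhibits $ZS$ as a finite sum of martingales, hence as a martingale. Condition~\ref{T:reprchar:2} of Theorem~\ref{T:reprchar} is thereby established.

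The only mildly delicate step is the passage from \eqref{eq:kpos} to the unrestricted identity $A-[N^k,M]=0$: here one must really use that $\widehat{H}^k>0$ on \emph{all} of $[0,T]\times\Omega$, since even a predictable null set where the positivity fails would leave a slack factor $\1{\widehat{H}^k>0}$ that prevents the deflator inclusion from extending to unconstrained 1-admissible strategies. Everything else is a direct reuse of computations already carried out in Theorem~\ref{T:NUPBR} and Theorem~\ref{T:reprchar}.
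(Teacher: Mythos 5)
Your proposal is correct and follows essentially the same route as the paper's proof: dual optimizers via Lemma~\ref{L:doptlm}, integration by parts combined with the strict positivity $\widehat{Y}^k_-\widehat{H}^k>0$ to get $A=[N^k,M]$, the Ansel--Stricker argument to place $\widehat{Y}^k$ in $\Dcal_{\rm loc}=\{Z\}$, and market clearing plus Definition~\ref{D:viable}\ref{D:viable:4} to exhibit $ZS$ as a sum of true martingales, with $\Mcal_{\rm sup}\ne\emptyset$ from Lemma~\ref{L:NFLVRC}. The only cosmetic difference is that you route the key identity through \eqref{eq:kpos} from the proof of Theorem~\ref{T:NUPBR}, whereas the paper derives $A=[N^k,M]$ directly from the martingale property of $\widehat{Y}^k(x_k+\widehat{H}^k\cdot S)$; the content is identical.
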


\begin{proof} Let $\widehat{Y}^{k}$ be the dual optimizers for the
agents' optimization problems. By Lemma~\ref{L:doptlm} there exist
local martingales $N^{k}$ such that $\widehat{Y}^{k}=\Ecal(-N^{k})$.
Let $S=M+A$ be the canonical decomposition of $S$. Integration by
parts yields 
\[
\widehat{Y}^{k}(x_{k}+\widehat{H}^{k}\cdot S)=(x_{k}+\widehat{H}^{k}\cdot S)\cdot\widehat{Y}^{k}+\widehat{Y}_{-}^{k}\widehat{H}^{k}\cdot(M+A-[N^{k},M]).
\]
Since $\widehat{Y}^{k}(x_{k}+\widehat{H}^{k}\cdot S)$ is a martingale,
and since $\widehat{Y}_{-}^{k}\widehat{H}^{k}>0$, we get $A=[N^{k},M]$.
Another application of the integration by parts formula shows that
$\widehat{Y}^{k}S$ is a local martingale, so we deduce $\widehat{Y}^{k}\in\Dcal_{{\rm loc}}=\{Z\}$
and hence $\widehat{Y}^{k}=Z$ for all $k$. Market clearing gives
\[
ZS=Z\sum_{k=1}^{n}(x_{k}+\widehat{H}^{k}\cdot S)=\sum_{k=1}^{n}\widehat{Y}^{k}(x_{k}+\widehat{H}^{k}\cdot S),
\]
which is a (true) martingale. Since also $\Mcal_{{\rm sup}}\ne\emptyset$
by Lemma~\ref{L:NFLVRC}, condition~{\ref{T:reprchar:2}} of Theorem~\ref{T:reprchar}
is satisfied. \end{proof}

Theorem~\ref{T:reprchar} characterizes those price processes~$S$
for which a supporting representative agent equilibrium exists. However,
the choice of representative agent utility $U$ is far from unique
in general, and it is natural to ask if some choices are more natural
(or useful) than others. In particular, if the price process $S$
is known to come from a constrained equilibrium $(S;\ U_{k},x_{k},\widehat{H}^{k}:k=1,\ldots,n)$,
it is of interest to know whether $U_{1},\ldots,U_{n}$ can be {\em
aggregated} to form a representative utility $U$. This was first
done in Negishi \cite{Negishi1960} in the complete market case, see
also Magill\cite{Magill1981}, Constantinides \cite{Constantinides1982},
Karatzas, Lehoczky, and Shreve \cite{Karatzas/Lehoczky/Shreve1990},
and in Cuoco and He \cite{CuocoHe2001} for incomplete markets. Extensions
involving investment constraints have been considered in Basak and
Cuoco \cite{BasakCuoco1998} and Hugonnier \cite{Hugonnier2012}.
Specifically, one considers functions 
\begin{equation}
U(x;\lambda)=\max_{c_{1}+\cdots+c_{n}=x}\ \sum_{k=1}^{n}\lambda_{k}U_{k}(c_{k}),\label{eq:Uaggr}
\end{equation}
for weight vectors $\lambda=(\lambda_{1},\ldots,\lambda_{n})\in\R_{++}^{n}$,
possibly stochastic, and asks for a weight vector such that the constant
strategy is optimal for $U(\,\cdot\,;\lambda)$ given the prevailing
price process~$S$. In this case we call $U(\,\cdot\,;\lambda)$
an {\em aggregate utility function}.

\begin{theorem} \label{T:aggr} Let $(S;\ U_{k},x_{k},\widehat{H}^{k}:k=1,\ldots,n)$
be a constrained equilibrium supporting a representative agent---that
is, $S$ satisfies one of the equivalent conditions in Theorem~\ref{T:reprchar}.
Then there are stochastic weights $\lambda=(\lambda_{1},\ldots,\lambda_{n})$
such that $U(\,\cdot\,;\lambda)$ is an aggregate utility function.
\end{theorem}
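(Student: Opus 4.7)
The plan is to follow the classical Negishi-style construction, adapted to the constrained semimartingale setting. The idea is to pick stochastic weights $\lambda_k$ that equate each agent's weighted marginal utility at his individually optimal terminal wealth with a common random variable; that common variable will then serve as the marginal utility of the aggregate utility function at the market-clearing aggregate wealth $S_T$, which will in turn make the constant strategy optimal.

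By viability of each $U_k$, the $k$th agent has a dual optimizer $\widehat{Y}^k\in\Dcal_{{\rm sup}}$ satisfying $U_k'(\widehat c_k)=y_k\widehat Y^k_T$, where $\widehat c_k=x_k+(\widehat H^k\cdot S)_T$ and $y_k=u_k'(x_k)>0$; Lemma~\ref{L:doptlm} ensures $\widehat Y^k>0$. By hypothesis, $S$ satisfies condition~\ref{T:reprchar:2} of Theorem~\ref{T:reprchar}, so there is some $Z\in\Dcal_{{\rm loc}}$ with $ZS$ a true martingale. Set
\[
\lambda_k=\frac{Z_T}{y_k\widehat Y^k_T},\qquad k=1,\ldots,n,
\]
so that $\lambda_k U_k'(\widehat c_k)=Z_T$ is the same for every $k$. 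Combined with the market-clearing identity $\sum_k\widehat c_k=S_0+(1\cdot S)_T=S_T$, this is precisely the first-order condition for $(\widehat c_1,\ldots,\widehat c_n)$ to solve the strictly concave optimization defining $U(S_T;\lambda)$; strict concavity makes the solution unique, and the envelope theorem then yields $U'(S_T;\lambda)=Z_T$.

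Next I show that $\widehat H\equiv 1$ is optimal for the aggregate utility $\widetilde U(x)=U(x;\lambda)$. Concavity together with $\widetilde U'(S_T)=Z_T$ gives the pointwise inequality $\widetilde U(x)-Z_T x\le\widetilde U(S_T)-Z_T S_T$ for every $x\ge 0$. For any $H\in\Acal_C$ I may assume $S_0+(H\cdot S)_T>0$ a.s.\ (otherwise $\E[\widetilde U(S_0+(H\cdot S)_T)]=-\infty$, and optimality is trivial), apply the estimate at $x=S_0+(H\cdot S)_T$, and take expectations, obtaining
\[
\E[\widetilde U(S_0+(H\cdot S)_T)]-\E[\widetilde U(S_T)]\le\E[Z_T(S_0+(H\cdot S)_T)]-\E[Z_T S_T].
\]
Since $ZS$ is a true martingale, $\E[Z_T S_T]=S_0$. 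As in the proof of Theorem~\ref{T:reprchar}, the existence of some $\Q\in\Mcal_{{\rm sup}}$ together with \cite[Proposition~3.5]{Pulido2014} forces $S_0+H\cdot S\ge 0$, so $Z(S_0+H\cdot S)$ is a nonnegative local martingale---local martingale because $Z\in\Dcal_{{\rm loc}}$ after rescaling the admissibility parameter of $H$, nonnegative by the preceding argument---hence a supermartingale, yielding $\E[Z_T(S_0+(H\cdot S)_T)]\le S_0$. The right-hand side of the displayed inequality is therefore $\le 0$, and optimality of $\widehat H\equiv 1$ follows.

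The main obstacle is this last step: one needs both that $Z(S_0+H\cdot S)$ is a local martingale and that it is nonnegative, so that the supermartingale estimate $\E[Z_T(S_0+(H\cdot S)_T)]\le S_0$ applies. The local martingale property comes from $Z\in\Dcal_{{\rm loc}}$ via admissibility rescaling, and the nonnegativity from Pulido's Ansel--Stricker theorem for supermartingales once $\Mcal_{{\rm sup}}\ne\emptyset$; both ingredients are delivered by the hypothesis that $S$ satisfies condition~\ref{T:reprchar:2} of Theorem~\ref{T:reprchar}, so the construction goes through with no further assumptions.
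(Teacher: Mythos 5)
Your proposal is correct and follows essentially the same route as the paper: you choose the Negishi weights $\lambda_k = Z_T/U_k'(x_k+(\widehat H^k\cdot S)_T)$ (your dual-optimizer formulation is the same quantity via Definition~\ref{D:viable}{\ref{D:viable:4}}), identify the equilibrium terminal wealths as the maximizer in~\eqref{eq:Uaggr} at $x=S_T$ through the first-order conditions and market clearing, and conclude optimality of the constant strategy from the concavity (supergradient) bound together with $\E[Z_TS_T]=S_0$ and the supermartingale estimate for the nonnegative local martingale $Z(S_0+H\cdot S)$, exactly as in the paper's proof. The only differences are cosmetic: you argue the local martingale property by rescaling to $1$-admissibility where the paper gets $1$-admissibility from positivity of the wealth process via \cite[Proposition~3.5]{Pulido2014}, and you use the aggregate supergradient inequality where the paper sums the individual concavity inequalities.
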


\begin{proof} Fix any weights $\lambda=(\lambda_{1},\ldots,\lambda_{n})$
and any $x>0$. The Lagrangian corresponding to the optimization problem
in~\eqref{eq:Uaggr} is 
\[
L(c_{1},\ldots,c_{n};\mu)=\sum_{k=1}^{n}\lambda_{k}U_{k}(c_{k})+\mu\Big(x-\sum_{k=1}^{n}c_{k}\Big),
\]
where $\mu\in\R$ is the Lagrange multiplier. By strict concavity,
the maximizer $(c_{1}^{*},\ldots,c_{n}^{*})$ and associated Lagrange
multiplier $\mu^{*}$ are characterized by the first order conditions,
namely 
\begin{equation}
x=c_{1}^{*}+\cdots+c_{n}^{*},\qquad\lambda_{k}U_{k}'(c_{k}^{*})-\mu^{*}=0\quad(k=1,\ldots,n).\label{eq:LFOC}
\end{equation}
Now, let $Z\in\Dcal_{{\rm loc}}$ be such that $ZS$ is a martingale---such
$Z$ exists by hypothesis. Moreover, let $\widehat{X}_{T}^{k}=x_{k}+(\widehat{H}^{k}\cdot S)_{T}$
be the final wealth of agent~$k$, and set 
\[
\lambda_{k}=\frac{Z_{T}}{U_{k}'(\widehat{X}_{T}^{k})},\qquad k=1,\ldots,n.
\]
Since also $\widehat{X}_{T}^{1}+\cdots+\widehat{X}_{T}^{n}=S_{T}$
holds by market clearing, \eqref{eq:LFOC} is satisfed with $c_{k}^{*}=\widehat{X}_{T}^{k}$
and $\mu^{*}=Z_{T}$. Hence 
\begin{equation}
U(S_{T};\lambda)=\lambda_{1}U_{1}(\widehat{X}_{T}^{1})+\cdots+\lambda_{n}U_{n}(\widehat{X}_{T}^{n}).\label{eq:LFOC2}
\end{equation}
Consider an arbitrary $H\in\Acal_{C}$ with strictly positive final
value $X_{T}=1+(H\cdot S)_{T}$. Then, for some $X_{T}^{1},\ldots,X_{T}^{n}$
with $X_{T}^{1}+\cdots+X_{T}^{n}=X_{T}$, we have 
\begin{align*}
U(X_{T};\lambda) & =\lambda_{1}U_{1}(X_{T}^{1})+\cdots+\lambda_{n}U_{n}(X_{T}^{n})\\
 & \le\sum_{k=1}^{n}\left[\lambda_{k}U_{k}(\widehat{X}_{T}^{k})+\lambda_{k}U_{k}'(\widehat{X}_{T}^{k})(X_{T}^{k}-\widehat{X}_{T}^{k})\right]\\
 & =U(S_{T};\lambda)+\sum_{k=1}^{n}\lambda_{k}U_{k}'(\widehat{X}_{T}^{k})(X_{T}^{k}-\widehat{X}_{T}^{k})\\
 & =U(S_{T};\lambda)+Z_{T}(X_{T}-S_{T}),
\end{align*}
where we used the concavity of $U_{k}$, the identity~\eqref{eq:LFOC2},
as well the fact that $Z_{T}=\lambda_{k}U_{k}'(\widehat{X}_{T}^{k})$
for all $k$. Using that $X$ is strictly positive by $\NFLVR_{C}$
(see~\eqref{eq:wealthpos} in the proof of Theorem~\ref{T:reprchar}),
so that $ZX$ is a positive local martingale and thus a supermartingale,
we get 
\[
\E[U(X_{T};\lambda)]\le\E[U(S_{T};\lambda)]+\E[Z_{T}(X_{T}-S_{T})]\le\E[U(S_{T};\lambda)].
\]
Thus the constant strategy is optimal for $U(\,\cdot\,;\lambda)$,
which is what we had to prove. \end{proof}

\section{Examples}

\label{S:ex}

In this section we provide examples illustrating to what extent our
previous results are sharp. The examples are based on Proposition~\ref{P:counterex1}
below, which is also interesting in its own right. We work on the
canonical path space $(\Omega,\Fcal,\F,\P)$, where the coordinate
process $W$ is Brownian motion under $\P$, and $\F$ is its natural
filtration. Some care is needed in the description of $\Omega$: it
is the set of all functions $\omega:\R_{+}\to\R\cup\{\infty\}$ that
are continuous on $[0,\zeta(\omega))$, where $\zeta(\omega)=\inf\{t\ge0:\omega(t)=\infty\}$,
and satisfy $\omega(t)=\infty$ for all $t\ge\zeta(\omega)$. Moreover,
the filtration $\F$ is not augmented with the $\P$-nullsets, and
this does not affect the stochastic calculus used below.

By means of a construction due to Föllmer~\cite{Follmer1972}, in
turn inspired by Doob's $h$-transform, any strictly positive local
martingale $Z$ with $Z_{0}=1$ can be viewed as the density process
of some probability measure, possibly not equivalent with respect
to~$\P$. Specifically, define its explosion time by 
\[
\sigma=\lim_{n}\sigma_{n},\qquad\sigma_{n}=\inf\left\{ t\ge0:Z_{t}\ge n\right\} ,
\]
which of course satisfies $\P(\sigma<\infty)=0$. Then there exists
a probability measure $\P^{*}$ on $\Fcal$ with $\P|_{\Fcal_{t}}\ll\P^{*}|_{\Fcal_{t}}$
for each $t\ge0$, such that 
\[
\frac{\d\P}{\d\P^{*}}\bigg|_{\Fcal_{t}}=\frac{1}{Z_{t}}\1{t<\sigma},
\]
and the equality $\E\left[Z_{t}\right]=\P^{*}(\sigma>t)$ holds for
each $t\ge0$. In particular, $Z$ is a true martingale under~$\P$
if and only if $\P^{*}(\sigma=\infty)=1$. We refer to $\P^{*}$ as
the {\em Föllmer measure} corresponding to~$Z$. The Föllmer measure
is uniquely determined on $\Fcal_{\sigma}$, but may have several
different extensions to $\Fcal$. This possible non-uniqueness plays
a key role in Larsson\cite{Larsson2013filtr} and is also discussed
in Perkowski and Ruf \cite{PerkowskiRuf2013}. For us the particular
choice of extension is not important.

Girsanov's theorem also extends to this setting. We will only need
the following simple version. If the local martingale $Z$ is of the
form $Z=\Ecal(\theta\cdot W)$ for some predictable $W$-integrable
process~$\theta$, there is a $\P^{*}$-Brownian motion $W^{*}$,
possibly defined on an extension of the original probability space,
such that 
\begin{equation}
W_{t}^{*}=W_{t}-\int_{0}^{t}\theta_{s}\d s,\qquad t<\sigma.\label{eq:Wstar}
\end{equation}
Indeed, Girsanov's theorem shows that $W_{t\wedge\sigma_{n}}^{*}$
is stopped Brownian motion under the equivalent measure $\P^{n}$
with density process $Z_{t\wedge\sigma_{n}}$ (this coincides with
$\P^{*}$ on $\Fcal_{\sigma_{n}}$), so $W_{t}^{*}$ is well-defined
on $\lc0,\sigma\lc$. Using that $\langle W^{*},W^{*}\rangle_{t}=t$,
one shows that $\lim_{t\uparrow\sigma}W_{t}^{*}$ exists $\P^{*}$-a.s.~on
$\{\sigma<\infty\}$. Thus, letting $W^{**}$ be an independent Brownian
motion (this is where we may have to extend the probability space),
the process $W^{*}\oo_{\lc0,\sigma\lc}+(W_{t}^{**}-W_{\sigma}^{**}+W_{\sigma}^{*})\oo_{\lc\sigma,\infty\lc}$
is Brownian motion under $\P^{*}$ and satisfies~\eqref{eq:Wstar}.
For further details regarding the construction of the Föllmer measure
and Girsanov's theorem, see Yoeurp\cite{Yoeurp1976}.

The following proposition is the basis for our examples below. It
is inspired by examples by Kazamaki, see~\cite[Chapter~1.4]{Kazamaki1994}.
To state the result, define 
\[
p(T)=\P\Big(\inf_{0\le t\le T}W_{t}\le-1\Big),
\]
the probability that standard Brownian motion hits $-1$ before time~$T$.

\begin{proposition} \label{P:counterex1} Fix $T>0$ and a constant
$\beta>1$. Let $X$ be the unique strong solution to 
\[
\d X_{t}=-X_{t}^{2}\d W_{t},\qquad X_{0}=1,
\]
and define a process $L$ and stopping time $\tau$ by 
\[
L=\frac{\Ecal(-\beta X\cdot W)}{\Ecal(-X\cdot W)}\qquad\text{and}\qquad\tau=\inf\left\{ t\ge0:L_{t}\ge1+p(T)^{-1}\right\} .
\]
Then the local martingales $Z^{(1)}=\Ecal(-X\oo_{\lc0,\tau\rc}\cdot W)$
and $Z^{(\beta)}=\Ecal(-\beta X\oo_{\lc0,\tau\rc}\cdot W)$ satisfy
$\E[Z_{T}^{(1)}]<1$ and $\E[Z_{T}^{(\beta)}]=1$. \end{proposition}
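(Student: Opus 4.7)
The plan is to realize both expectations as hitting probabilities under F\"ollmer measures. Write $M^{(\alpha)}=\Ecal(-\alpha X\cdot W)$ for $\alpha\in\{1,\beta\}$, so that $Z^{(\alpha)}_T=M^{(\alpha)}_{T\wedge\tau}$, and let $\P^{*,\alpha}$ denote the F\"ollmer measure of $M^{(\alpha)}$ with explosion time $\sigma^{(\alpha)}$. An optional-sampling extension of the basic identity $\E[M^{(\alpha)}_t]=\P^{*,\alpha}(\sigma^{(\alpha)}>t)$ applied to the bounded stopping time $T\wedge\tau$ gives
\[
\E[Z^{(\alpha)}_T]=\P^{*,\alpha}\bigl(\sigma^{(\alpha)}>T\wedge\tau\bigr),\qquad\alpha\in\{1,\beta\},
\]
so everything reduces to two comparisons of $\sigma^{(\alpha)}$ with $T\wedge\tau$.

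First I would observe that $Y:=1/X$ solves $\d Y=\d W+Y^{-1}\d t$ by It\^o, so under $\P$ it is a three-dimensional Bessel process started at $1$ (hence strictly positive on $[0,T]$). By the generalized Girsanov theorem recalled in the preamble, $W^{*,\alpha}_t=W_t+\alpha\int_0^t X_s\,\d s$ is a Brownian motion under $\P^{*,\alpha}$ up to $\sigma^{(\alpha)}$. Consequently, under $\P^{*,1}$ we have $\d Y=\d W^{*,1}$, so $Y$ is Brownian motion started at $1$ and $\sigma^{(1)}$ is the first hitting time of $0$ by $Y$, giving $\P^{*,1}(\sigma^{(1)}\le T)=p(T)$; whereas under $\P^{*,\beta}$, $\d Y=\d W^{*,\beta}-(\beta-1)Y^{-1}\d t$, a Bessel-type process of dimension $3-2\beta<1$ that hits $0$ almost surely at time $\sigma^{(\beta)}$. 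A direct It\^o computation yields $\d L=(1-\beta)XL(\d W+X\d t)$, so $L$ is a strictly positive $\P^{*,1}$-local martingale (hence supermartingale), while under $\P^{*,\beta}$ one has $\d\log L=(1-\beta)X\,\d W^{*,\beta}+\tfrac12(\beta-1)^2X^2\,\d t$.

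For $\E[Z^{(\beta)}_T]=1$ the goal is $\tau<\sigma^{(\beta)}$ $\P^{*,\beta}$-a.s. By Dambis--Dubins--Schwarz there is a $\P^{*,\beta}$-Brownian motion $B$ with $\log L_t=(1-\beta)B_{\rho_t}+\tfrac12(\beta-1)^2\rho_t$, where $\rho_t=\int_0^t X_s^2\,\d s$. The crucial point is that $\rho_t\to\infty$ as $t\uparrow\sigma^{(\beta)}$; I would establish this by applying It\^o to $\log Y$ under $\P^{*,\beta}$, giving $\d\log Y=Y^{-1}\d W^{*,\beta}-(\beta-\tfrac12)X^2\,\d t$. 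Since $\log Y\to-\infty$ at $\sigma^{(\beta)}$ and the quadratic variation of the martingale part equals $\rho$, a finite limit of $\rho_t$ would make the martingale part bounded (DDS again), forcing $\log Y$ to stay bounded, a contradiction. Granted $\rho_t\to\infty$, the law of large numbers for Brownian motion shows that the positive drift $\tfrac12(\beta-1)^2\rho_t$ dominates $(1-\beta)B_{\rho_t}$, so $\log L_t\to+\infty$ and $L$ crosses $1+p(T)^{-1}$ strictly before $\sigma^{(\beta)}$, as claimed.

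For $\E[Z^{(1)}_T]<1$ the goal is $\P^{*,1}(\sigma^{(1)}\le T\wedge\tau)>0$. Doob's maximal inequality applied to the nonnegative supermartingale $L$ under $\P^{*,1}$ gives $\P^{*,1}(\tau\le T)\le L_0/(1+p(T)^{-1})=p(T)/(1+p(T))$, while $\P^{*,1}(\sigma^{(1)}\le T)=p(T)$ by the preceding step. Inclusion--exclusion then forces
\[
\P^{*,1}\bigl(\sigma^{(1)}\le T,\ \tau>T\bigr)\ \ge\ p(T)+\frac{1}{1+p(T)}-1\ =\ \frac{p(T)^2}{1+p(T)}\ >\ 0,
\]
and on this event $\sigma^{(1)}\le T=T\wedge\tau$, so the F\"ollmer identity above yields $\E[Z^{(1)}_T]<1$. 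The only genuinely delicate step in this plan is the divergence of $\rho_t$ at $\sigma^{(\beta)}$ under $\P^{*,\beta}$; the $\log Y$ argument handles it, but it is the one place where nontrivial pathwise information about the boundary behavior of $Y$ enters.
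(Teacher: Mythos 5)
Your overall architecture is sound and close in spirit to the paper's: both arguments pass to F\"ollmer measures, use the generalized Girsanov theorem, control the likelihood-ratio process $L$ by Doob's inequality for the $Z^{(1)}$ part, and use a time-change/LLN argument to force $L$ above its barrier for the $Z^{(\beta)}$ part. Your packaging differs in that you take the F\"ollmer measures of the \emph{unstopped} exponentials $M^{(\alpha)}=\Ecal(-\alpha X\cdot W)$ and reduce via the optional-sampling identity $\E[M^{(\alpha)}_{T\wedge\tau}]=\P^{*,\alpha}(\sigma^{(\alpha)}>T\wedge\tau)$, whereas the paper works directly with the stopped processes $Z^{(1)},Z^{(\beta)}$ and only the fixed-time identity; your identity is correct (the density process $\frac{1}{M_t}\1{t<\sigma}$ is a $\P^{*}$-martingale, so optional sampling applies), but you should state this one-line justification rather than merely invoke it. The $Z^{(1)}$ half then goes through: $M^{(1)}=X=1/Y$ exactly (worth saying explicitly, since this is what identifies $\sigma^{(1)}$ with the hitting time of $0$ by $Y$), $L$ is a nonnegative $\P^{*,1}$-local martingale up to $\sigma^{(1)}$, and your Doob-plus-inclusion--exclusion bound matches the paper's $p(T)-p(T)/(1+p(T))>0$.

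The genuine gap is in the step you yourself flag as delicate. You deduce $\rho_t=\int_0^t X_s^2\,\d s\to\infty$ as $t\uparrow\sigma^{(\beta)}$ from the premise that $\log Y\to-\infty$ at $\sigma^{(\beta)}$, i.e.\ that the explosion time of $M^{(\beta)}$ coincides with the hitting time of $0$ by $Y$. That identification is asserted (via the remark that $Y$ is a Bessel-type process of dimension $3-2\beta$ which hits zero) but never proved: the Bessel comparison only says the SDE for $Y$ reaches zero in finite time, not that this time equals $\sigma^{(\beta)}$, and for $\beta\ne 1$ there is no exact identity like $M^{(1)}=1/Y$ (one has $M^{(\beta)}=X^{\beta}\exp(-\tfrac{\beta(\beta-1)}{2}\rho)$, from which explosion of $M^{(\beta)}$ gives only $\limsup X=\infty$, not $Y\to 0$). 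Worse, the natural proof that $Y\to0$ at $\sigma^{(\beta)}$ is itself via $\rho\to\infty$, so as written the argument is circular. The fix is easy and is essentially what the paper does: apply your DDS/boundedness argument directly to $\log M^{(\beta)}_t=-\beta\int_0^t X_s\,\d W^{*,\beta}_s+\tfrac{\beta^2}{2}\rho_t$ (equivalently to $1/M^{(\beta)}=\Ecal(\beta X\cdot W^{*,\beta})$, the paper's representation): if $\rho$ had a finite limit at $\sigma^{(\beta)}$, the time-changed Brownian integral would stay bounded and $M^{(\beta)}$ could not explode, contradicting $\sigma^{(\beta)}<\infty$. With that substitution (and dropping the $\log Y$ detour), your proof of $\E[Z_T^{(\beta)}]=1$ closes and the whole argument is correct.
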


The idea of the construction is best understood by passing to the
Föllmer measures $\P^{(1)}$ and $\P^{(\beta)}$ corresponding to
$Z^{(1)}$ and $Z^{(\beta)}$, respectively. It is not hard to see
that if $\tau$ were always infinite, both these density processes
would explode under their respective Föllmer measure, and would thus
both be strict local martingales under~$\P$. Of course $\tau$ is
not always infinite---in fact, it is designed to occur early enough
to $\P^{(\beta)}$-almost surely prevent $Z^{(\beta)}$ from exploding,
while at the same time occurring late enough that $Z^{(1)}$ does
explode with positive $\P^{(1)}$-probability. The device for achieving
this is the ``likelihood ratio'' process $L$ (note that we have
$L_{t}=Z_{t}^{(\beta)}/Z_{t}^{(1)}$ for $t<\tau$.) It turns out
that $L$ is a supermartingale under $\P^{(1)}$, and hence fails
to trigger $\tau$ with positive probability, but is an exploding
submartingale under $\P^{(\beta)}$ (or, more precisely, would have
exploded without the intervention of the stopping time~$\tau$.)
We now turn to the details.

\begin{proof} We first prove $\E[Z_{T}^{(1)}]<1$. Let $\P^{(1)}$
be the Föllmer measure associated with $Z^{(1)}$, and let $W^{(1)}$
be a $\P^{(1)}$-Brownian motion such that 
\[
W_{t}^{(1)}=W_{t}+\int_{0}^{t}X_{s}\1{s<\tau}\d s,\qquad t<\sigma^{(1)},
\]
where $\sigma^{(1)}$ is the explosion time of $Z^{(1)}$. A calculation
yields, for all $t<\sigma^{(1)}$, the equalities $X_{t\wedge\tau}=\Ecal(-X\cdot W)_{t\wedge\tau}=Z_{t}^{(1)}$
and $\frac{1}{X_{t\wedge\tau}}=1+W_{t\wedge\tau}+\int_{0}^{t\wedge\tau}X_{s}\d s=1+W_{t\wedge\tau}^{(1)}$.
This implies 
\begin{align*}
1-\E[Z_{T}^{(1)}] & =\P^{(1)}(\sigma^{(1)}\le T)\\
 & \ge\P^{(1)}(1+W^{(1)}\text{ hits zero before }T)-\P^{(1)}(\tau\le T)\\
 & =p(T)-\P^{(1)}(\tau\le T).
\end{align*}
It suffices to show that the right side is strictly positive. To this
end, we compute, for $t<\sigma^{(1)}$, 
\begin{align*}
L_{t} & =\exp\left(-(\beta-1)\int_{0}^{t}X_{s}\d W_{s}-\frac{1}{2}\int_{0}^{t}(\beta^{2}-1)X_{s}^{2}\d s\right)\\
 & =\exp\left(-(\beta-1)\int_{0}^{t}X_{s}\d W_{s}^{(1)}-\frac{1}{2}\int_{0}^{t}\left[(\beta^{2}-1)X_{s}^{2}-2(\beta-1)X_{s}^{2}\1{s<\tau}\right]\d s\right)\\
 & =\exp\left(-(\beta-1)\int_{0}^{t}X_{s}\d W_{s}^{(1)}-\frac{1}{2}\int_{0}^{t}(\beta-1)^{2}X_{s}^{2}\d s-\int_{t\wedge\tau}^{t}(\beta-1)X_{s}^{2}\d s\right)\\
 & =\Ecal\left(-(\beta-1)X\cdot W^{(1)}\right)_{t}\exp\left(-(\beta-1)\int_{t\wedge\tau}^{t}X_{s}^{2}\d s\right).
\end{align*}
Thus $L$ is a supermartingale under $\P^{(1)}$, which implies 
\[
\P^{(1)}(\tau<\infty)=\P^{(1)}\Big(\sup_{t\ge0}L_{t}\ge1+p(T)^{-1}\Big)\le(1+p(T)^{-1})^{-1}<p(T),
\]
by Doob's inequality. This proves $\E[Z_{T}^{(1)}]<1$.

We now establish $\E[Z_{T}^{(\beta)}]=1$. Similarly as before, let
$\P^{(\beta)}$ be the Föllmer measure associated with $Z^{(\beta)}$,
and let $W^{(\beta)}$ be a $\P^{(\beta)}$-Brownian motion such that
\[
W_{t}^{(\beta)}=W_{t}+\int_{0}^{t}\beta X_{s}\1{s<\tau}\d s,\qquad t<\sigma^{(\beta)}
\]
holds, where $\sigma^{(\beta)}$ is the explosion time of $Z^{(\beta)}$.
We then have 
\[
\frac{1}{Z_{t}^{(\beta)}}=\Ecal\left(\beta X\cdot W^{(\beta)}\right)_{t\wedge\tau},\qquad t<\sigma^{(\beta)},
\]
which implies $\P^{(\beta)}(\tau<\sigma^{(\beta)}<\infty)=0$ since
$Z^{(\beta)}$ does not move after~$\tau$. It also implies that
$\sigma^{(\beta)}$ can be written 
\[
\sigma^{(\beta)}=\lim_{n\to\infty}\inf\left\{ t\ge0:\int_{0}^{t\wedge\tau}X_{s}^{2}\d s\ge n\right\} .
\]
On the other hand, a calculation similar to the one in the first part
of the proof yields 
\[
\frac{1}{L_{t\wedge\tau}}=\Ecal\left((\beta-1)X\cdot W^{(\beta)}\right)_{t\wedge\tau},\qquad t<\sigma^{(\beta)},
\]
Hence on the event $\{\sigma^{(\beta)}<\tau\}$, we have $\lim_{t\uparrow\sigma^{(\beta)}}L_{t}=\infty$
(up to a $\P^{(\beta)}$-nullset). But this is impossible since, by
definition of $\tau$, $L_{t}$ cannot reach above $1+p(T)^{-1}$
prior to $\tau$. It follows that $\P^{(\beta)}(\sigma^{(\beta)}<\tau)=0$,
and consequently 
\[
\P^{(\beta)}(\sigma^{(\beta)}<\infty)=0.
\]
The proposition is proved. \end{proof}

\subsection{Example 1}

\label{S:ex1}

Using Proposition~\ref{P:counterex1} we can immediately construct
an example of a price process $S$ with nonnegative drift, which satisfies
both $\NFLVR_{C}$ and $\NUPBR$, but not $\NA$, on the time interval~$[0,T]$.
In other words, we have both an equivalent supermartingale measure
and a local martingale deflator, but no local martingale measure (and
certainly no martingale measure.) Indeed, in the notation of Proposition~\ref{P:counterex1},
set 
\[
S=\frac{1}{Z^{(1)}}.
\]
Then 
\[
\frac{\d S_{t}}{S_{t}}=X_{t}\1{t\le\tau}\d W_{t}+X_{t}^{2}\1{t\le\tau}\d t,
\]
so that $S$ is strictly positive with positive drift. The only candidate
density process is $Z^{(1)}$, which is a local martingale deflator
but fails to be a true martingale. Hence $\Dcal_{{\rm loc}}\ne\emptyset$
but $\Mcal_{{\rm loc}}=\emptyset$. On the other hand, $Z^{(\beta)}$
is a true martingale and thus induces a measure $\P^{(\beta)}$ that
is equivalent to $\P$ on $\Fcal_{T}$. Since 
\[
\frac{\d S_{t}}{S_{t}}=X_{t}\1{t\le\tau}\d W_{t}^{(\beta)}-(\beta-1)X_{t}^{2}\1{t\le\tau}\d t,
\]
it follows that $S$ is a supermartingale under $\P^{(\beta)}$, which
thus lies in $\Mcal_{{\rm sup}}$.

Note that $Z^{(1)}\in\Dcal_{{\rm loc}}$ has the property that $Z^{(1)}S=1$
is a true martingale. Hence by Theorem~\ref{T:reprchar} the price
process $S$ is supported by some constrained representative agent
equilibrium, and $\ND_{C}$ holds.

\subsection{Example 2}

\label{S:ex2}

In Theorem~\ref{T:reprchar}, the representative agent has stochastic
utility in general. We now modify the previous example slightly to
get a price process $S$ satisfying $\NFLVR_{C}$ and $\NUPBR$, but
not $\NA$, with the property that the constant strategy $\widehat{H}\equiv1$
is optimal for an agent with logarithmic utility $U(x)=\log x$. In
other words, in equilibrium the investor does not short the risky
asset. This log-utility agent is therefore a valid representative
agent for an equilibrium supporting the price process~$S$. The key
for this is to make the drift of $S$ strictly positive for all $t\in[0,T]$,
not just prior to~$\tau$.%
\footnote{Indeed, if $S$ is a supermartingale on $\rc\tau,T\rc$, then $\E[\log S_{T}]=\E[\log(S_{T}/S_{T\wedge\tau})]+\E[\log S_{T\wedge\tau}]\le\E[\log S_{T\wedge\tau}]$,
where Jensen's inequality and the assumption $\E[S_{T}\mid\Fcal_{T\wedge\tau}]\le S_{T\wedge\tau}$
were used. Hence it is always better to stop trading at $\tau$. Economically
this is because a risk-averse agent will not take on the risk inherent
in holding $S$, without being compensated by positive returns in
excess of the risk-free rate.%
} We define the candidate local martingale deflator by 
\[
Z=\Ecal\left(-(X\oo_{\lc0,\tau\rc}+\oo_{\rc\tau,T\rc})\cdot W\right),
\]
where $X$ solves $\d X_{t}=-X_{t}^{2}\d W_{t}$. The candidate density
process for an equivalent supermartingale measure is given by 
\[
Z^{{\rm sup}}=\Ecal\left(-(\beta X\oo_{\lc0,\tau\rc}+\oo_{\rc\tau,T\rc})\cdot W\right)
\]
for some $\beta>1$. Finally, the price process is defined by 
\[
S=\frac{1}{Z}.
\]

\begin{lemma} The process $Z$ is a strict local martingale, and
$Z^{{\rm sup}}$ is a true martingale. \end{lemma}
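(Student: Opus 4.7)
The plan is to reduce both claims to Proposition~\ref{P:counterex1} by observing that $Z$ and $Z^{\rm sup}$ factor as products involving $Z^{(1)}$ and $Z^{(\beta)}$, respectively, with a common, well-behaved auxiliary factor. Concretely, set $\tilde Z = \Ecal(-\oo_{\rc\tau,T\rc}\cdot W)$. Because the predictable integrands $X\oo_{\lc0,\tau\rc}$ and $\oo_{\rc\tau,T\rc}$ have disjoint supports in $[0,T]\times\Omega$, the quadratic covariation of $-X\oo_{\lc0,\tau\rc}\cdot W$ with $-\oo_{\rc\tau,T\rc}\cdot W$ vanishes identically. Yor's formula for stochastic exponentials then gives
\[
Z = Z^{(1)}\tilde Z \qquad \text{and} \qquad Z^{\rm sup} = Z^{(\beta)}\tilde Z.
\]

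Next, $\tilde Z$ is a true martingale on $[0,T]$: its integrand $\oo_{\rc\tau,T\rc}$ is bounded by $1$, so Novikov's condition is trivially satisfied. Moreover $\tilde Z_t = 1$ for all $t \le \tau$, so by optional stopping
\[
\E[\tilde Z_T \mid \Fcal_{T\wedge\tau}] = \tilde Z_{T\wedge\tau} = 1.
\]
On the other hand, both $Z^{(1)}$ and $Z^{(\beta)}$ are constant after $\tau$ by construction (their integrands vanish on $\rc\tau,T\rc$), so $Z^{(1)}_T = Z^{(1)}_{T\wedge\tau}$ and $Z^{(\beta)}_T = Z^{(\beta)}_{T\wedge\tau}$ are both $\Fcal_{T\wedge\tau}$-measurable.

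Combining these two observations and applying Proposition~\ref{P:counterex1},
\[
\E[Z_T] = \E\bigl[Z^{(1)}_T\,\E[\tilde Z_T \mid \Fcal_{T\wedge\tau}]\bigr] = \E[Z^{(1)}_T] < 1,
\]
and analogously $\E[Z^{\rm sup}_T] = \E[Z^{(\beta)}_T] = 1$. Thus $Z$ is a nonnegative local martingale with $\E[Z_T] < Z_0 = 1$, hence a strict local martingale, while $Z^{\rm sup}$ is a nonnegative local martingale whose terminal expectation equals its initial value, and is therefore a true martingale on $[0,T]$. The only step requiring care is the product decomposition, but this is immediate once the disjoint-support observation is made; everything after that is routine conditioning.
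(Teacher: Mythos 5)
Your proof is correct, and it reaches the conclusion by a genuinely different, more elementary route than the paper. The paper's proof first changes measure to $\widetilde{\P}$ with density process $\widetilde{Z}$, uses Girsanov to obtain a $\widetilde{\P}$-Brownian motion $\widetilde{W}$, re-runs the whole construction of Proposition~\ref{P:counterex1} under $\widetilde{\P}$ (defining $\widetilde{X}$, $\widetilde{L}$, $\widetilde{\tau}$), and then invokes pathwise uniqueness of the SDE for $X$ to identify $\widetilde{X}=X$ and $\widetilde{\tau}=\tau$ before $\tau$, so that $Z$ and $Z^{\rm sup}$ are recognized as $\widetilde{Z}\widetilde{Z}^{(1)}$ and $\widetilde{Z}\widetilde{Z}^{(\beta)}$ and the strict-local/true martingale properties are transferred back to $\P$ through the density. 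You stay under $\P$ throughout: the disjoint-support observation plus Yor's formula gives $Z=Z^{(1)}\tilde Z$ and $Z^{\rm sup}=Z^{(\beta)}\tilde Z$ with exactly the objects of Proposition~\ref{P:counterex1}; since $Z^{(1)}_T$ and $Z^{(\beta)}_T$ are $\Fcal_{T\wedge\tau}$-measurable and $\E[\tilde Z_T\mid\Fcal_{T\wedge\tau}]=\tilde Z_{T\wedge\tau}=1$ (Novikov plus optional sampling at the bounded stopping time $T\wedge\tau$), the expectations collapse to those computed in the proposition, and the conclusions follow from the standard expectation criteria for strictness and for a nonnegative local martingale with constant expectation being a true martingale. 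Your route buys brevity: no measure change, no Girsanov on a possibly non-equivalent measure, and no pathwise-uniqueness identification; it even sidesteps an apparent sign slip in the paper's proof, where $\widetilde{Z}$ is written with integrand $+\oo_{\rc\tau,T\rc}$ although $Z$ carries $-\oo_{\rc\tau,T\rc}$ (harmless for the conclusion, but your factorization with $\tilde Z=\Ecal(-\oo_{\rc\tau,T\rc}\cdot W)$ is the consistent one). The only steps you compress—pulling the nonnegative $\Fcal_{T\wedge\tau}$-measurable factor out of the conditional expectation, and the supermartingale argument for the true-martingale claim—are indeed routine, so I see no gap. What the paper's longer route offers in exchange is mainly thematic consistency with the measure-change machinery used elsewhere in the section, not any additional generality needed here.
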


\begin{proof} We would like to apply Proposition~\ref{P:counterex1},
and for this we first change probability measure to $\widetilde{\P}$
given by $\d\widetilde{\P}=\widetilde{Z}_{T}\d\P$, where $\widetilde{Z}=\Ecal(\oo_{\rc\tau,T\rc}\cdot W)$
is a true martingale by Novikov's condition. The process $\d\widetilde{W}_{t}=\d W_{t}-\oo_{\rc\tau,T\rc}(t)\d t$
is then Brownian motion under $\widetilde{\P}$. Define $\widetilde{X}$
as the solution to $\d\widetilde{X}_{t}=-\widetilde{X}_{t}^{2}\d\widetilde{W}_{t}$,
let $\widetilde{L}_{t}=\Ecal(-\beta\widetilde{X}\cdot\widetilde{W})/\Ecal(-\widetilde{X}\cdot\widetilde{W})$,
and consider the stopping time $\widetilde{\tau}=\inf\{t\ge0:\widetilde{L}_{t}\ge1+p(T)^{-1}\}$.
Then, with 
\[
\widetilde{Z}^{(1)}=\Ecal\left(-\widetilde{X}\oo_{\lc0,\widetilde{\tau}\rc}\cdot\widetilde{W}\right)\qquad\text{and}\qquad\widetilde{Z}^{(\beta)}=\Ecal\left(-\beta\widetilde{X}\oo_{\lc0,\widetilde{\tau}\rc}\cdot\widetilde{W}\right),
\]
Proposition~\ref{P:counterex1} shows that $\widetilde{Z}^{(1)}$
is strict local martingale and $\widetilde{Z}^{(\beta)}$ is a true
martingale. This carries over to $\widetilde{Z}\widetilde{Z}^{(1)}$
and $\widetilde{Z}\widetilde{Z}^{(\beta)}$, respectively. Now, by
pathwise uniqueness of the defining SDE for $\widetilde{X}$ and $X$,
together with the fact that $\widetilde{W}_{t}=W_{t}$ for $t<\tau$,
it follows that $\widetilde{X}_{t}=X_{t}$ for $t<\tau$. Hence, for
$t<\tau$, we have $\widetilde{L}_{t}=L_{t}$ and thus $\widetilde{\tau}=\tau$.
Consequently, $\widetilde{Z}\widetilde{Z}^{(1)}=Z$ and $\widetilde{Z}\widetilde{Z}^{(\beta)}=Z^{{\rm sup}}$,
and this proves the lemma. \end{proof}

As in the previous example it is now straightforward to verify that
$S$ has strictly positive drift, $\Dcal_{{\rm loc}}=\{Z\}$, $Z^{{\rm sup}}\in\Dcal_{{\rm sup}}$,
and that $\Mcal_{{\rm loc}}=\emptyset$ since $Z$ is not a true martingale.
It thus remains to show that the constant strategy $\widehat{H}\equiv1$
is optimal under logarithmic utility. But this follows directly from
the definition of $S$, Jensen's inequality, and the fact that $Z$
is a local martingale deflator. Indeed, if $H$ is any $1$-admissible
strategy and setting $X_{T}=1+(H\cdot S)_{T}$, we have 
\[
\E[\log X_{T}-\log S_{T}]=\E[\log(Z_{T}X_{T})]\le\log(\E[Z_{T}X_{T}])\le0.
\]
Note that the nonnegativity of $H$ was not used. This means that
$\widehat{H}$ is optimal also among unconstrained, $1$-admissible
strategies. However, since $\NA$ fails, it is not optimal among {\em
all} admissible strategies. In this sense, the short sale constraint
is binding, because an unconstrained agent with access to any admissible
strategy could produce arbitrage and therefore (dramatically) improve
upon $\widehat{H}$. Any such strategy would involve short positions.
On the other hand, since the optimal strategy $\widehat{H}$ is strictly
positive, it is tempting to think that the short sale constraint is
not binding. As the above shows, this interpretation is misleading.

Another way of saying the same thing is the following: naively one
may conjecture that a strictly positive strategy could be improved
upon without violating the short sale constraint by adding ``a little
bit'' of an (unconstrained) arbitrage strategy. This would imply
that $\NA$ is incompatible with strictly positive optimal strategies.
Of course the above shows that this is not the case---but what is
interesting to note is that the analogous naive argument is correct
for $\NUPBR$; this (together with market clearing) is what drives
Theorem~\ref{T:NUPBR}.

This discussion suggests that optimal strategies need not change in
a continuous way with respect to changes in investment constraints.
Further investigation of this phenomenon is left for future research.

\section{Conclusion}

\label{S:concl}

In this paper we have considered constrained informational efficiency
of a financial market. That is, we study conditions under which a
given price process can be viewed as the outcome of some equilibrium
where agents are not allowed to short the risky asset. We find, somewhat
surprisingly, that constrained informational efficiency implies the
existence of a local martingale deflator. The explanation for this
result resides with the role of market clearing: at any point in time,
at least one agent is locally unconstrained. However, this intuition
does not carry over to arbitrage opportunities---indeed, we give an
example of an equilibrium with a single representative agent with
logarithmic utility, optimally holds the full supply of the risky
asset, but where arbitrage is nonetheless possible using unconstrained
strategies. Less surprisingly, $\NFLVR_{C}$ necessarily holds in
equilibrium, and $\ND_{C}$ holds whenever a representative agent
equilibrium exists. The existence of such an equilibrium is fully
characterized: we show that a constrained representative agent equilibrium
exists if and only if $\NFLVR_{C}$ holds and there is a local martingale
deflator turning the price process into a true martingale. Moreover,
in a multi-agent complete market equilibrium, these conditions are
satisfied whenever all agents have strictly positive holdings in the
risky asset.

The above shows that the inclusion of short sale constraints, arguably
a fairly modest modification of the fully frictionless framework,
yields a surprisingly rich set of phenomena. Nonetheless, much remains
to be done. In the future it would be interesting to investigate more
general trading constraints, multiple assets, and price processes
with jumps. We end by listing some open questions that naturally arose
from the analysis presented in the present paper. It would be enlightening
to see these issues resolved. 
\begin{enumerate}
\item If $H,K$ are two C-maximal strategies, is the sum $H+K$ again C-maximal?
Since each investor strategy $\widehat{H}^{k}$ is C-maximal in equilibrium
by Lemma~\ref{L:NFLVRC}, an affirmative answer to this question
would imply that $\widehat{H}^{1}+\cdots+\widehat{H}^{n}=1$ is again
C-maximal, so that $\ND_{C}$ always holds in equilibrium. 
\item Is there a constrained informationally efficient market that is not
consistent with any constrained representative agent equilibrium?
Equivalently (see Theorem~\ref{T:reprchar}), does there exist a
constrained equilibrium $(S;\ U_{k},x_{k},\widehat{H}^{k}:k=1,\ldots,n)$
such that $ZS$ is a strict local martingale for every $Z\in\Dcal_{{\rm loc}}$? 
\item If $ZS$ is a true martingale for some $Z\in\Dcal_{{\rm loc}}$, then
$\ND_{C}$ is satisfied. Does the converse hold? If not, is there
a different dual characterization of~$\ND_{C}$? 
\end{enumerate}
\bibliographystyle{abbrv}


\end{document}